\numberwithin{equation}{section}
\theoremstyle{plain}
\newtheorem{thm}{Theorem}[section]
\newtheorem{theorem}[thm]{Theorem}
\newtheorem{claim}{Claim}
\theoremstyle{remark}
\newtheorem{remark}[thm]{Remark}
\theoremstyle{definition}
\newcounter{mnotecount}[section]
\renewcommand{\phi}{\varphi}
\newcommand{\bR}{\mathbb{R}}
\newcommand{\bL}{\mathbb{L}}
\newcommand{\cC}{\mathcal{C}}
\newcommand{\cM}{\mathcal{M}}
\newcommand{\cY}{\mathcal{Y}}
\newcommand{\ghat}{{\widehat{g}}}
\newcommand{\Khat}{\widehat{K}}
\newcommand{\nuhat}{\widehat{\nu}}
\newcommand{\nhat}{\widehat{n}}
\newcommand{\nablahat}{\widehat{\nabla}}
\newcommand{\gbar}{\overline{g}}
\renewcommand{\hbar}{\overline{h}}
\newcommand{\nubar}{\overline{\nu}}
\newcommand{\phibar}{\overline{\varphi}}
\newcommand{\Wtil}{\widetilde{W}}
\newcommand{\Ztil}{\widetilde{Z}}
\newcommand{\thetatil}{\widetilde{\theta}}
\newcommand{\phitil}{\widetilde{\phi}}
\newcommand{\xitil}{\widetilde{\xi}}
\newcommand{\sigmatil}{\widetilde{\sigma}}
\newcommand{\vtil}{\widetilde{v}}
\newcommand{\ubar}{\overline{u}}
\newcommand{\definedas}{\mathrel{\raise.095ex\hbox{\rm :}\mkern-5.2mu=}}
\DeclareMathOperator{\tr}{tr}
\DeclareMathOperator{\divg}{div}
\newcommand{\scal}{\mathrm{Scal}}
\DeclareMathOperator{\DeltaL}{\Delta_{\bL, \mathit g}}
\newcommand{\gbold}{\boldsymbol g}
\newcommand{\dv}{\, d{\rm vol}}
\newcommand{\ds}{\, d{\rm s}}
\begin{document}

\title[Small TT-tensor solutions to the constraint equations]
{A new point of view on the solutions to the Einstein constraint
equations with arbitrary mean curvature and small TT-tensor}

\begin{abstract}
In this short note, we give a construction of solutions to the Einstein constraint equations
using the well known conformal method. Our method gives a result similar to the one in
\cite{HNT1,HNT2,MaxwellNonCMC}, namely existence when the so called TT-tensor $\sigma$ is small
and the Yamabe invariant of the manifold is positive. The method we describe is however much
simpler than the original method and allows easy extensions to several other problems. Some
non-existence results are also considered.
\end{abstract}

\author[R. Gicquaud]{Romain Gicquaud}
\address[R. Gicquaud]{Laboratoire de Math\'ematiques et de Physique Th\'eorique \\ Universit\'e Fran\c cois Rabelais de Tours \\ Parc de Grandmont\\ 37200 Tours \\ FRANCE}
\email{\href{mailto: R. Gicquaud <Romain.Gicquaud@lmpt.univ-tours.fr>}{romain.gicquaud@lmpt.univ-tours.fr}}

\author[Q.A. Ng\^{o}]{Qu\^{o}\hspace{-0.5ex}\llap{\raise 1ex\hbox{\'{}}}\hspace{0.5ex}c Anh Ng\^{o}}
\address[Q.A. Ng\^{o}]{Laboratoire de Math\'ematiques et de Physique Th\'eorique \\ Universit\'e Fran\c cois Rabelais de Tours \\ Parc de Grandmont\\ 37200 Tours \\ FRANCE}
\email{\href{mailto: Q. A. Ngo <Quoc-Anh.Ngo@lmpt.univ-tours.fr>}{quoc-anh.ngo@lmpt.univ-tours.fr}}
\email{\href{mailto: Q. A. Ngo <bookworm\_vn@yahoo.com>}{bookworm\_vn@yahoo.com}}

\date{\today \ at \currenttime}

\keywords{Einstein constraint equations, non-CMC, conformal method, implicit function theorem, positive Yamabe invariant, small TT-tensor}

\maketitle

\tableofcontents

\section{Introduction}\label{secIntro}

\subsection{The Einstein constraint equations}\label{secConstraints}

Initial data for the Cauchy problem in general relativity are usually given in terms of the
geometry of the Cauchy surface $(M, \ghat)$ in the spacetime $(\cM, \gbold)$ of dimension $n+1$
with $n \geqslant 3$. Assuming that the spacetime $\cM$ is globally hyperbolic and $M$ is a
spacelike Cauchy surface, one can define the metric $\ghat$ induced on $M$ by the spacetime metric
$\gbold$ and the second fundamental form $\Khat$ of $M$ in $\cM$. It follows from the Einstein
equations together with the Gauss and Codazzi equations that $\ghat$ and $\Khat$ are related by
the following equations
\begin{subequations}\label{eqConstraintEquations}
\begin{empheq}[left=\empheqlbrace]{align}
\label{eqHamiltonian}
\scal_\ghat + (\tr_\ghat \Khat )^2 - |\Khat|_\ghat^2 & =2 \rho,\\
\label{eqMomentum}
\divg_\ghat \Khat - d (\tr_\ghat \Khat ) & = j,
\end{empheq}
\end{subequations}
where $\rho$ and $j$ are related to the other fields such as matter fields, electromagnetic field, etc.
that one wants to include into the universe under consideration. Also in \eqref{eqConstraintEquations}, $\scal_\ghat$ is the scalar
curvature of $\ghat$. To keep things simple, we will consider no field but the gravitational field,
hence, forcing $\rho \equiv 0$ and $j \equiv 0$.

A simple dimension counting argument shows that the system \eqref{eqConstraintEquations} is
under-determined, thus, it is generally hard to solve \eqref{eqConstraintEquations} in this form. To overcome this difficulty, we need
to decompose both $\ghat$ and $\Khat$ into given data and unknowns that will have to be adjusted so that
Equations \eqref{eqHamiltonian} and \eqref{eqMomentum} are fulfilled. Several such splitting exist and
we refer the reader to \cite{BartnikIsenberg} for a detailed review of some known results on the constraint
equations. In the literature, the most commonly used method is the conformal method which we briefly describe now.
We invite the reader to have a look at the very nice recent work of Maxwell
\cite{MaxwellConformalParameterization,MaxwellConformalMethod,MaxwellInitialData} for a deep understanding of this
method and its connection to other widely used methods.

The given data in the conformal method consist of
\begin{itemize}
 \item a Riemannian manifold $(M, g)$,
 \item a function $\tau: M \to \bR$,
 \item and a symmetric $2$--tensor $\sigma$ on $M$ which is traceless and transverse in the following sense
\[\tr_g \sigma \equiv 0, \quad \divg_g \sigma \equiv 0.\]
\end{itemize}

\noindent As a shorthand, we will call $\sigma$ a TT-tensor. The unknowns in the conformal method are
\begin{itemize}
 \item a positive function $\phi: M \to \bR^*_+$,
 \item and a $1$--form $W$.
\end{itemize}

Combining all these elements, one can form $(\ghat, \Khat)$ as follows:
\begin{equation}\label{eqDecomposition}
\begin{aligned}
\ghat & = \phi^{N-2} g,\\
\Khat & = \frac{\tau}{n} \ghat + \phi^{-2} \left(\sigma + \bL_g W\right),
\end{aligned}
\end{equation}
where $N \definedas 2n/(n-2)$ and $\bL_g$ is the conformal Killing operator given by
\begin{equation*}\label{eqConformalKillingOperator}
\bL_g W_{ij} \definedas \nabla_i W_j + \nabla_j W_i - \frac{2}{n} \nabla^k W_k g_{ij},
\end{equation*}
with $\nabla$ the Levi--Civita connection associated to the background metric $g$. Here, $\tau$
is the mean curvature of $M$ as an hypersurface of $(\cM, \gbold)$ given by 
\[\tau = \ghat^{ij} \Khat_{ij}.\]
The choice for $\sigma$ and $W$ in \eqref{eqDecomposition} is related to the York splitting,
see the remark at the end of Section \ref{secBoundaryConditions}.

Using the decomposition \eqref{eqDecomposition}, the constraint equations \eqref{eqConstraintEquations} become
a system of PDEs for $(\phi, W)$ given as follows:
\begin{subequations}\label{systemWithoutB}
\begin{empheq}[left=\empheqlbrace]{align}
-\frac{4(n-1)}{n-2} \Delta_g \phi + \scal_g \phi &= - \frac{n-1}{n} \tau^2 \phi^{N-1} + \big| \sigma + \bL_g W \big|_g^2 \phi^{-N-1}, \label{eqLichnerowicz}\\
\DeltaL W &= \frac{n-1}{n} \phi^N d\tau, \label{eqVector}
\end{empheq}
\end{subequations}
where we denote $\Delta_g \phi = \divg_g(\nabla_g \phi )$ and $\DeltaL W = \divg_g (\bL_g W)$. In the literature, Equation
\eqref{eqLichnerowicz} is commonly known as the Lichnerowicz equation while Equation \eqref{eqVector} is usually
called the vector equation.

The system \eqref{systemWithoutB} is notoriously hard to solve except in the case when $\tau$ is a constant
function which is now well understood, see for instance \cite{Isenberg}. Indeed, when $\tau$ is constant, Equation \eqref{eqVector}
only involves $W$ and generically implies that $W \equiv 0$. Therefore, one is left with solving the Lichnerowicz
equation \eqref{eqLichnerowicz} without any $W$. However, everything dramatically changes when $\tau$ is no longer
constant. Perturbation arguments can be used to address the case when $d\tau$ is small in some sense. But, until
recently, very few results were known for arbitrary $\tau$. Two major breakthroughs were obtained first by M. Holst,
G. Tsosgtgerel, and G. Nagy in \cite{HNT1,HNT2}, by D. Maxwell in \cite{MaxwellNonCMC}, and then by M. Dahl, E. Humbert,
and the first author in \cite{DahlGicquaudHumbert}.

Usually, standard methods to solve elliptic PDEs require an a priori knowledge of the solutions, i.e. nice domains in which one
can try to apply fixed point theorems, fixed point arguments, etc. However, via a simple scaling argument, changing
$\phi$ to $\lambda \phi$ where $\lambda \gg 1$ shows that the two dominant terms in the Lichnerowicz equation are
$\frac{n-1}{n} \tau^2 \phi^{N-1}$ and $| \bL_g W|_g^2 \phi^{-N-1}$. These two terms have the same scaling behavior
but come up with opposite signs in the Lichnerowicz equation \eqref{eqLichnerowicz}. Although the first term has the
right sign and in fact helps us in applying the maximum principle, the second one has the wrong sign and eventually
destroys any attempt to get an a priori upper bound for $\phi$ when $d\tau$ is not small.

\subsection{The Holst--Nagy--Tsogtgerel--Maxwell method}\label{secHNT}

Losing such an a priori estimate, a very nice idea was proposed in \cite{HNT1,HNT2}. The idea was pushed further
in \cite{MaxwellNonCMC}. It consists in looking for
solutions of the system \eqref{systemWithoutB} with $\phi$ and $W$ very close to zero to make the two dominant
terms irrelevant. To do this, they require the manifold $(M, g)$ to be closed with a positive Yamabe invariant,
$\cY(g)>0$ (see Equation \eqref{eqDefYamabe} below). Consequently, the scalar curvature $\scal_g$ becomes in some
sense the dominant term. In addition, they also require that $\sigma$ is small to control the right hand side
of Equation \eqref{eqLichnerowicz}. The theorem they obtained is the following.

\begin{theorem}[see \cite{MaxwellNonCMC}]\label{thmMaxwell}
Let $M$ be a compact Riemannian manifold without boundary. Given $p > n$, let $g \in W^{2, p}$, $\tau \in W^{1, p}$, and
$\sigma \in W^{1, p}$, $\sigma \not\equiv 0$ be given data. Assume that the Yamabe invariant $\cY(g)$ is strictly
positive and that $g$ has no conformal Killing vector fields. Then, if $\left\|\sigma\right\|_{L^\infty}$ is small
enough, there exists at least one solution $(\phi, W) \in W^{2, p}(M, \bR) \times W^{2, p}(M, T^*M)$ to the system
\eqref{systemWithoutB}.
\end{theorem}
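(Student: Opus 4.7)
The plan is to solve the coupled system \eqref{systemWithoutB} by a decoupling and Schauder fixed point scheme. First, the positive Yamabe hypothesis $\cY(g) > 0$, together with closedness of $M$, allows me to pick a conformal representative (via the solution of the Yamabe problem) with $\scal_g > 0$ pointwise. Since the conformal method is conformally covariant, the transformed data $(g, \sigma, \tau)$ satisfy the same set of hypotheses as before (with a possibly different but still small $\sigma$), so this is a genuine reduction. From now on I therefore assume $\scal_g > 0$ on $M$, which is what will make the linear part of the Lichnerowicz equation coercive.

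Next, fix constants $0 < \phi_- < \phi_+$ to be chosen below and set $\mathcal{U} = \{\phi \in C^0(M) : \phi_- \leq \phi \leq \phi_+\}$. Given $\phi_0 \in \mathcal{U}$, I first solve the vector equation $\DeltaL W_0 = \tfrac{n-1}{n} \phi_0^N \, d\tau$ for $W_0 \in W^{2,p}(M, T^*M)$; this is where the absence of conformal Killing vector fields is used, since it makes $\DeltaL$ an isomorphism on the relevant Sobolev spaces. Standard elliptic estimates and the embedding $W^{2,p} \hookrightarrow L^\infty$ (valid for $p > n$) produce
\[
\|\bL W_0\|_{L^\infty} \leq C \,\phi_+^N \,\|d\tau\|_{L^p}.
\]
Plugging $W_0$ back into \eqref{eqLichnerowicz} turns the problem into a scalar semilinear equation for $\phi$, which I solve by the method of sub and supersolutions.

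The crux of the argument is finding a constant supersolution $\phi_+$ compatible with the above bound on $\bL W_0$. Because the $\tau^2 \phi^{N-1}$ term has the right sign, the supersolution inequality for a constant $\phi_+$ reduces, using $|\sigma + \bL W_0|^2 \leq 2|\sigma|^2 + 2|\bL W_0|^2$, to
\[
(\inf_M \scal_g)\,\phi_+^{N+2} \;\geq\; 2\|\sigma\|_{L^\infty}^2 \;+\; 2 C^2 \,\phi_+^{2N}\,\|d\tau\|_{L^p}^2.
\]
Since $2N > N+2$ for every $n \geq 3$, the second term on the right is absorbable into the left once $\phi_+$ is small, leaving the condition $\phi_+^{N+2} \gtrsim \|\sigma\|_{L^\infty}^2$. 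Choosing $\phi_+ = K\,\|\sigma\|_{L^\infty}^{2/(N+2)}$ for a suitable $K$ works as soon as $\|\sigma\|_{L^\infty}$ is sufficiently small. For the subsolution, the hypothesis $\sigma \not\equiv 0$ ensures that the right-hand side of the Lichnerowicz equation is not identically zero, and a global positive constant $\phi_-$ much smaller than $\phi_+$ can be produced by a maximum-principle argument, using the fact that the term $|\sigma|^2 \phi^{-N-1}$ blows up as $\phi \to 0^+$.

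With $\phi_- \leq \phi_+$ uniform sub/supersolutions available for every $\phi_0 \in \mathcal{U}$, standard arguments produce a solution $\phi \in \mathcal{U}$ of the scalar equation and hence a map $T : \mathcal{U} \to \mathcal{U}$, $\phi_0 \mapsto \phi$. Compactness of $T$ follows from the uniform $W^{2,p}$ bound it provides together with the compact embedding $W^{2,p} \hookrightarrow C^0$, and continuity is routine once one tracks the Lipschitz dependence of $W_0$ on $\phi_0$ and of the Lichnerowicz solution on $W_0$. Schauder's fixed point theorem then yields a fixed point, which gives the desired $(\phi, W)$. The main obstacle I anticipate is producing the ordered pair $(\phi_-, \phi_+)$ uniformly in $\phi_0 \in \mathcal{U}$: the supersolution forces a lower bound on $\phi_+$ in terms of $\|\sigma\|_{L^\infty}$, while one must simultaneously keep $\phi_+$ small enough that the $|\bL W_0|^2$ contribution is absorbable; it is exactly the interplay of smallness of $\sigma$, positivity of $\scal_g$ (from $\cY(g) > 0$), and solvability of the vector equation (from the no conformal Killing field hypothesis) that makes this possible.
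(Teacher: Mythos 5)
Two steps in your outline are genuine gaps. First, the opening reduction ``the conformal method is conformally covariant, so WLOG $\scal_g>0$'' is not available: only the scalar Lichnerowicz equation with a \emph{fixed} source $\sigma+\bL W$ transforms well under $g\mapsto\psi^{N-2}g$. The coupled system \eqref{systemWithoutB} does not, because $\psi^{-2}(\sigma+\bL_g W)$ is in general not of the form $\bar\sigma+\bL_{\bar g}\bar W$ for any $\bar W$, and $\DeltaL$ is not conformally covariant (the transformed vector equation acquires first-order terms in $d\psi$); the paper itself emphasizes this failure of covariance at the end of Section \ref{secClosed}, citing \cite{MaxwellConformalMethod}. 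This particular gap is repairable without covariance: keep the original metric and replace your constant supersolution by $\phi_+=\beta\psi$, where $\psi^{N-2}g$ has positive scalar curvature, using covariance of the conformal Laplacian alone; your absorption argument then runs with $\beta$ in place of $\phi_+$. Second, and more seriously, the subsolution step fails as stated: once $\scal_g>0$, \emph{no} positive constant $\phi_-$ is a subsolution, because at any point where $\sigma+\bL W_0=0$ the requirement reads $\scal_g\phi_-+\tfrac{n-1}{n}\tau^2\phi_-^{N-1}\leqslant 0$. The hypothesis $\sigma\not\equiv 0$ only gives $\int_M|\sigma+\bL W_0|^2_g\dv_g>0$ (by the $L^2$-orthogonality of TT-tensors and $\bL$-images on a closed manifold); it does not prevent pointwise zeros of the source, whose location moves with $W_0=W(\phi_0)$. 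What is needed is a \emph{non-constant global} subsolution, uniform over $\phi_0\in\mathcal{U}$, e.g.\ built from the solution $u$ of a linear equation $-\tfrac{4(n-1)}{n-2}\Delta_g u+\scal_g u=S$, where $S$ is a uniform pointwise lower bound for $|\sigma+\bL W_0|^2$ extracted from $|\sigma+\bL W_0|^2\geqslant\tfrac12|\sigma|^2-|\bL W_0|^2$ together with your hierarchy $\|\bL W_0\|_{L^\infty}=O(\phi_+^N)$, followed by a normalization of the type $(\max u)^{-\frac{N+1}{N+2}}u$ as in Claim \ref{clLichneroLimit}. This ``global subsolution'' construction is precisely the nontrivial ingredient of \cite{HNT1,HNT2,MaxwellNonCMC}, and your sketch replaces it by an assertion that cannot hold.

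For perspective: your route is the original Holst--Nagy--Tsogtgerel/Maxwell scheme (decoupling, global barriers, Schauder fixed point), which is the proof of Theorem \ref{thmMaxwell} in \cite{MaxwellNonCMC}; this paper does not reprove that theorem but establishes the weaker Theorem \ref{thmClosed} by a different mechanism --- unique solvability of the $\tau\equiv 0$, $W\equiv 0$ Lichnerowicz equation, the implicit function theorem applied to the $\mu$-deformed system \eqref{DeformedsystemWithoutB} at $\mu=0$, and the scaling of Claim \ref{clScaling} --- exactly so as to avoid the delicate global-barrier constructions where your proposal currently breaks down.
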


Assume that $M$ is a compact manifold without boundary, we recall that the Yamabe invariant $\cY(g)$ of $(M, g)$ is defined as
\begin{equation}\label{eqDefYamabe}
\cY(g)=\inf_{0 \not\equiv \phi \in W^{1, 2}(M, \bR) } \frac{\int_M \big(  \frac{4(n-1)}{n-2} |d\phi|^2_g + \scal_g \phi^2  \big) \dv_g}{\left(\int_M \phi^N \dv_g\right)^{N/2}}
\end{equation}

The method of \cite{HNT1,HNT2} was recently adapted to other situation such as asymptotically Euclidean manifolds in
\cite{DiltsIsenbergMazzeoMeier}, asymptotically cylindrical manifolds in \cite{Leach}, compact manifolds with boundary
in \cite{Dilts,HolstMeierTsogtgerel}, and to asymptotically Euclidean manifolds with boundary in \cite{HolstMeier}. As
can be seen from the statement of Theorem \ref{thmMaxwell} and as we have just mentioned earlier, the smallness of
$\|\sigma\|_{L^\infty}$ was used. However, it is worth mentioning that such an $L^\infty$--smallness assumption can be
weaken to $\left\|\sigma\right\|_{L^2}$ small enough, see \cite{Nguyen}.

\subsection{The Dahl--Gicquaud--Humbert method}\label{secDGH}

The idea of \cite{DahlGicquaudHumbert} goes in the opposite direction to the method in Subsection \ref{secHNT}.
Intuitively, the idea of \cite{DahlGicquaudHumbert} is to study what happens if $\phi$ and $W$ can become very
large, i.e. what prevents the existence of an a priori estimate. The answer to this question is heuristically
that if $\phi$ can become very large, by setting $\gamma = \left\|\phi\right\|_{L^\infty}$ and by
renormalizing $\phi$, $W$, and $\sigma$ as follows:
\[
\phitil = \gamma^{-1} \phi, \quad \Wtil =\gamma^{-N} W, \quad \sigmatil = \gamma^{-N} \sigma,
\]
it turns out that $\phitil$ and $\Wtil$ satisfy the following system
\[
\left\lbrace
\begin{aligned}
\frac{1}{\gamma^{N-2}} \left( -\frac{4(n-1)}{n-2} \Delta_g \phitil + \scal_g \phitil \right) & =- \frac{n-1}{n} \tau^2 \phitil^{N-1}+
 \big|\sigmatil + \bL_g\Wtil\big|_g^2 \phitil^{-N-1},\\
\DeltaL\Wtil & = \frac{n-1}{n} \phitil^N d\tau.
\end{aligned}
\right.
\]
In the limit as $\gamma \to +\infty$, one is left with
\[
\frac{n-1}{n} \tau^2 \phitil^{N-1} = \big|\bL_g \Wtil\big|_g^2 \phitil^{-N-1}.
\]
Therefore, $\Wtil$ becomes a non-trivial solution to the so-called limit equation
\begin{equation}\label{eqLimit0}
\DeltaL \Wtil = \sqrt{\frac{n-1}{n}} \big|\bL_g\Wtil\big|_g \frac{d\tau}{\tau}.
\end{equation}
The rigorous argument leads to a similar limit equation with a parameter $\alpha \in (0, 1]$ given as follows:
\begin{equation}
\label{eqLimit}
\DeltaL\Wtil = \alpha \sqrt{\frac{n-1}{n}} \big|\bL_g\Wtil\big|_g \frac{d\tau}{\tau}.
\end{equation}
The main theorem of \cite{DahlGicquaudHumbert} can be stated as follows.

\begin{theorem}\label{thmLimit}
Let $M$ be a compact Riemannian manifold without boundary. Given $p > n$, let $g \in W^{2, p}$, $\tau \in W^{1, p}$, and
$\sigma \in W^{1, p}$ be given data. Assume that $g$ has no conformal Killing vector fields, $\tau > 0$ and that
$\sigma \not\equiv 0$ if $\cY(g) \geqslant 0$. If the limit equation \eqref{eqLimit} admits no non-zero solution
$\Wtil$ for all values of the parameter $\alpha \in (0, 1]$, then there exists at least one solution
$(\phi, W) \in W^{2, p}(M, \bR) \times W^{2, p}(M, T^*M)$ to the system \eqref{systemWithoutB}.
\end{theorem}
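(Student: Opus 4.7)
The plan is to prove Theorem \ref{thmLimit} by a continuity (Leray--Schauder) argument, deforming to the CMC case. Introduce a one-parameter family indexed by $t \in [0,1]$ by replacing the vector equation with
\[
\DeltaL W = t\,\frac{n-1}{n}\phi^N d\tau,
\]
while keeping the Lichnerowicz equation unchanged. At $t=0$ the system decouples: the no-CKF hypothesis forces $W \equiv 0$, and what remains is the CMC Lichnerowicz equation with source $|\sigma|_g^2$, which is classically solvable under the stated hypotheses ($\tau>0$ handles the case $\cY(g)<0$, and the nontriviality of $\sigma$ handles the case $\cY(g)\geqslant 0$). At $t=1$ one recovers \eqref{systemWithoutB}. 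Let $S \subset [0,1]$ be the set of parameters at which the deformed system admits a solution $(\phi,W) \in W^{2,p}(M,\bR^*_+)\times W^{2,p}(M,T^*M)$; it then suffices to show $S$ is nonempty, open, and closed.

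Openness at a solution is standard: the linearization of the vector equation is $\DeltaL$, which is invertible by the no-CKF hypothesis, and the linearized Lichnerowicz operator is invertible on positive functions by a maximum-principle / Fredholm argument, so the implicit function theorem applies. The heart of the proof is closedness, which requires uniform a priori estimates. A positive lower bound on $\phi$ follows from a sub-solution comparison applied to \eqref{eqLichnerowicz}. The genuinely hard estimate is a uniform upper bound on $\phi$. Suppose for contradiction that there is a sequence of solutions $(\phi_k,W_k)$ at $t_k \to t_\infty \in [0,1]$ with $\gamma_k \definedas \|\phi_k\|_{L^\infty} \to +\infty$. Rescale as in the heuristic of \S\ref{secDGH}:
\[
\phitil_k \definedas \gamma_k^{-1}\phi_k, \qquad \Wtil_k \definedas \gamma_k^{-N}W_k, \qquad \sigmatil_k \definedas \gamma_k^{-N}\sigma.
\]
The rescaled vector equation $\DeltaL \Wtil_k = t_k\frac{n-1}{n}\phitil_k^N d\tau$ together with $\|\phitil_k\|_{L^\infty}=1$ gives uniform $W^{2,p}$ bounds on $\Wtil_k$; and the rescaled Lichnerowicz equation, after multiplying by $\gamma_k^{-(N-2)}$, yields a $W^{2,p}$ bound on $\phitil_k$ provided the lower bound on $\phi_k$ scales compatibly.

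Extract a subsequential limit $(\phitil_\infty, \Wtil_\infty)$. The factor $\gamma_k^{-(N-2)}\to 0$ kills the left-hand side of the rescaled Lichnerowicz equation in the limit, leaving the algebraic identity
\[
\frac{n-1}{n}\tau^2\phitil_\infty^{N-1} = |\bL\Wtil_\infty|_g^2\,\phitil_\infty^{-N-1}.
\]
Taking a square root (using $\tau>0$) gives $\phitil_\infty^N = \sqrt{n/(n-1)}\,|\bL\Wtil_\infty|_g/\tau$; substituting into the limit of the vector equation produces
\[
\DeltaL \Wtil_\infty = t_\infty\sqrt{\frac{n-1}{n}}\,|\bL\Wtil_\infty|_g\,\frac{d\tau}{\tau},
\]
which is \eqref{eqLimit} with $\alpha = t_\infty$. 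One then has to rule out the degenerate alternatives: if $t_\infty = 0$ then $\Wtil_\infty=0$ by no CKF, forcing $\phitil_\infty\equiv 0$ via the algebraic identity, which contradicts $\|\phitil_\infty\|_{L^\infty}=1$ once the lower bound has been propagated; hence $\alpha = t_\infty \in (0,1]$ and $\Wtil_\infty \not\equiv 0$, contradicting the hypothesis on \eqref{eqLimit}. I expect the main obstacle to be precisely this blow-up analysis: establishing a lower bound on $\phitil_k$ that survives the scaling so that $\phitil_\infty$ remains strictly positive, and passing to the limit in the singular term $\phi^{-N-1}$ of the Lichnerowicz equation in a way compatible with $W^{2,p}$ convergence.
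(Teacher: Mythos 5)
You should note at the outset that this paper contains no proof of Theorem \ref{thmLimit}: the statement is imported from \cite{DahlGicquaudHumbert}, and the paper only records the heuristic rescaling of Section \ref{secDGH} together with Remark \ref{rkLimit}, which sketches a deformation strategy of exactly the kind you propose and flags its main difficulty. Measured against the actual proof in \cite{DahlGicquaudHumbert} (see also \cite{Nguyen}), your architecture is essentially the right one: deform the vector equation to $\DeltaL W = t\frac{n-1}{n}\phi^N d\tau$, solve at $t=0$, and convert failure of a priori bounds into a non-zero solution of \eqref{eqLimit} by rescaling. But two steps have genuine gaps. The first is structural: you run a continuity method and therefore need openness of $S$, i.e.\ invertibility of the linearization of the \emph{coupled} system at an arbitrary solution. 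For $t>0$ that linearization is not triangular (the vector equation contributes the off-diagonal term $-t\frac{n-1}{n}N\phitil^{N-1}\theta\,d\tau$ in addition to the $-2\phi^{-N-1}\langle\sigma+\bL W,\bL\,\cdot\,\rangle$ coupling in the Lichnerowicz block), and invertibility of the two diagonal blocks does not give invertibility of the full operator; since solutions of the conformal constraint system are in general non-unique, degeneracy of this linearization cannot be excluded. This is precisely why \cite{DahlGicquaudHumbert} runs a Schauder/Schaefer-type fixed-point argument on the very same one-parameter family, where only continuity, compactness and an a priori bound on the whole family $t\in[0,1]$ are needed and openness never enters; your openness step should be replaced by such an argument.

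The second gap is in the blow-up analysis, which you rightly call the heart of the matter but for which the mechanism you propose cannot work. Any lower bound $\phi_k\geqslant c$ with $c$ depending only on the data rescales to $\phitil_k\geqslant c/\gamma_k\to 0$, so it does not ``survive the scaling''; worse, the algebraic identity you derive forces $\phitil_\infty$ to vanish wherever $\bL\Wtil_\infty$ does, so no global positive lower bound on the limit can hold. Likewise the uniform $W^{2,p}$ bound on $\phitil_k$ does not follow from the rescaled Lichnerowicz equation: after multiplying by $\gamma_k^{N-2}$ its right-hand side is not controlled in $L^p$, precisely because of the singular factor $\phitil_k^{-N-1}$. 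What is actually needed, and what constitutes the technical core of \cite{DahlGicquaudHumbert}, is: (a) nondegeneracy of the limit, obtained not from a global lower bound but by evaluating the rescaled Lichnerowicz equation via the maximum principle at a point where $\phitil_k=1$, which gives $|\sigmatil_k+\bL\Wtil_k|_g^2\geqslant\frac{n-1}{n}(\min\tau)^2-O(\gamma_k^{-(N-2)})$ there (this uses $\tau>0$ and is also what rules out $t_\infty=0$, since $\DeltaL\Wtil_\infty=0$ together with the no-CKF hypothesis would force $\Wtil_\infty\equiv 0$, contradicting $|\bL\Wtil_\infty|\geqslant c>0$ at the limit point); and (b) identification of the limit of $\phitil_k^N$ in \eqref{eqVector}, proved by squeezing $\phitil_k$ between sub- and supersolutions built from $|\sigmatil_k+\bL\Wtil_k|_g$ so that $\phitil_k^N\to\sqrt{n/(n-1)}\,|\bL\Wtil_\infty|_g/\tau$ in a sense strong enough to pass to the limit, rather than by $W^{2,p}$ compactness of $\phitil_k$. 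You honestly flag both points as the main obstacle, but the key lemma is missing and the specific route you indicate for it (a scaled lower bound plus $W^{2,p}$ convergence of $\phitil_k$) is the one that fails. On the positive side, your choice of deformation, which leaves the $\tau^2\phitil^{N-1}$ term intact and puts the parameter only on the source of the vector equation, is better adapted to keeping the parameter away from $0$ at blow-up than the $\mu$-deformation \eqref{DeformedsystemWithoutB}; that is exactly the difficulty recorded in Remark \ref{rkLimit}.
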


It is worth noticing that the result in \cite{DahlGicquaudHumbert} requires that $\tau$ is bounded away
from zero, however, it involves no assumption on the Yamabe invariant $\cY(g)$.
A simplified proof of Theorem \ref{thmLimit} appears in \cite{Nguyen}.

This method was adapted to several other contexts such as asymptotically hyperbolic manifolds in
\cite{GicquaudSakovich} and asymptotically cylindrical manifolds in \cite{DiltsLeach}. In particular,
strong results are obtained for negatively curved manifolds, see
\cite[Proposition 6.2 and Remark 6.3]{GicquaudSakovich}. The case of asymptotically Euclidean manifolds
and compact manifolds with boundary are currently work in progress \cite{DiltsGicquaudIsenberg,GicquaudHumbertNgo}.
New difficulties show up in these cases.

\subsection{Objective and outline of the paper}

As we have already seen from Subsections \ref{secHNT} and \ref{secDGH}, both approaches we presented are
dual in a certain sense. The first one constructs solutions which are very close to zero while the second
one is a means to ensure control on the size of the solutions. In this note, we emphasize the duality between
both methods showing that the Holst et al. method can be rephrased as a scaling argument. This duality
can potentially be deepen further, recasting both methods in a single framework, see Remark \ref{rkLimit}.
This also sheds a light on the role of the assumptions of the main theorem of \cite{MaxwellNonCMC}.

Nevertheless, our new method leads to a result which is not as good as the one of \cite{MaxwellNonCMC} but
it is much simpler than the original one and appears also quite versatile.

In Section \ref{secClosed}, we present in detail the simplest case of our method, namely when the manifold
is closed. Also in this section, a non-existence result is presented. Then, we give a quick look at the
asymptotically Euclidean case in Section \ref{secAE} and at compact manifolds with boundary in Section
\ref{secCWB}.

\section{The closed case}\label{secClosed}

In this section, we are interested in studying solutions of \eqref{systemWithoutB}
when the underlying manifold $M$ is compact without boundary. In the first part of this section, we prove a
result which basically says that \eqref{systemWithoutB} is solvable when $\cY(g)>0$ and $\sigma \not \equiv 0$
is small enough, see Theorem \ref{thmClosed} below. Then, we improve \cite[Theorem 1.7]{DahlGicquaudHumbert}
by showing that \eqref{systemWithoutB} admits no solution provided $\cY(g)>0$, $\sigma \equiv 0$, and $d\tau /\tau$
is small in the $L^n$--norm.

\subsection{Existence results for small but non-vanishing TT-tensor}

The main result of this subsection is the following.

\begin{theorem}\label{thmClosed} 
Let $M$ be a compact manifold without boundary. Given $p > n$, let $g \in W^{2, p}(M, S^2(M))$,
$\tau \in W^{1, p}(M, \bR)$ and $\sigmatil \in W^{1, p}(M, S^2(M))$, $\sigmatil \not\equiv 0$ be given data. Assume that
the Yamabe invariant $\cY(g)$ is strictly positive and that $g$ has no conformal Killing vector fields. There exists
$\eta_0 > 0$ such that for any $\eta \in (0, \eta_0)$ there exists at least one solution
$(\phi, W) \in W^{2, p}(M, \bR) \times W^{2, p}(M, T^*M)$ to the system \eqref{systemWithoutB} with $\sigma = \eta \sigmatil$. 
\end{theorem}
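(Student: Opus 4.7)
The strategy is to recast the smallness of $\sigma = \eta\sigmatil$ as a scaling parameter and then apply the implicit function theorem at the limit $\eta \to 0$. The natural ansatz is
\[
\phi = \eta^{a}\psi, \qquad W = \eta V, \qquad a \definedas \frac{n-2}{2(n-1)},
\]
the exponent $a$ being forced by the requirement that the conformal Laplacian $L\phi \definedas -\frac{4(n-1)}{n-2}\Delta_g\phi + \scal_g\phi$ and the potentially dangerous term $|\sigma + \bL W|_g^2\,\phi^{-N-1}$ scale identically in $\eta$. A direct computation using $N+2 = 4(n-1)/(n-2)$ and $N-2 = 4/(n-2)$ transforms \eqref{systemWithoutB}, after dividing the Lichnerowicz equation by $\eta^a$ and the vector equation by $\eta$, into
\[
\left\{\begin{aligned}
L\psi + \frac{n-1}{n}\,\tau^2\, \eta^{2/(n-1)}\,\psi^{N-1} &= \big|\sigmatil + \bL V\big|_g^2\, \psi^{-N-1},\\
\DeltaL V &= \frac{n-1}{n}\,\eta^{1/(n-1)}\, \psi^N\, d\tau.
\end{aligned}\right.
\]
Both exponents of $\eta$ are strictly positive, so the system extends continuously to $\eta = 0$ and the two equations decouple there.

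At $\eta = 0$ the vector equation reduces to $\DeltaL V = 0$, forcing $V_0 = 0$ since $g$ admits no conformal Killing vector field, and the scalar equation becomes the zero-mean-curvature Lichnerowicz equation
\[
L\psi_0 = \big|\sigmatil\big|_g^2\, \psi_0^{-N-1}.
\]
Because $\cY(g) > 0$ the operator $L$ is coercive, and the map $\psi \mapsto L\psi - |\sigmatil|_g^2 \psi^{-N-1}$ is monotone increasing on $(0, +\infty)$; either by direct minimization of the strictly convex functional $\frac{1}{2}\int \psi L\psi \dv_g + \frac{1}{N}\int |\sigmatil|_g^2 \psi^{-N}\dv_g$, or by appeal to the well-developed theory of the CMC Lichnerowicz equation in the maximal-slicing case, one obtains a unique positive solution $\psi_0 \in W^{2,p}(M, \bR)$, bounded away from zero by compactness of $M$.

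To pass from $(\psi_0, 0)$ to $\eta > 0$, I define $F(\psi, V, \eta)$ as the difference of the two sides of the scaled system and apply the implicit function theorem at $(\psi_0, 0, 0)$. The linearization
\[
D_{(\psi, V)}F(\psi_0, 0, 0) = \begin{pmatrix} L + (N+1)\,|\sigmatil|_g^2\, \psi_0^{-N-2} & -2\,\langle \sigmatil,\, \bL\,\cdot\, \rangle_g\, \psi_0^{-N-1} \\ 0 & \DeltaL \end{pmatrix}
\]
is block upper-triangular with both diagonal entries invertible: the $(1,1)$ block is a self-adjoint elliptic operator whose zeroth-order coefficient is strictly positive (trivial kernel by the maximum principle), while the $(2,2)$ block is invertible precisely because $g$ has no conformal Killing vector field. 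Hence the derivative is an isomorphism from $W^{2,p}(M,\bR)\times W^{2,p}(M,T^*M)$ onto $L^p\times L^p$, producing solutions $(\psi_\eta, V_\eta)$ for small $|\eta|$. Positivity of $\psi_\eta$ follows by $C^0$-closeness to $\psi_0 > 0$ via Sobolev embedding ($p > n$), and unwinding the scaling yields the claimed $(\phi, W)$.

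The one subtlety worth flagging is that $F$ depends on $\eta$ only through the fractional powers $\eta^{1/(n-1)}$ and $\eta^{2/(n-1)}$, which is merely H\"older continuous at $\eta = 0$ when $n \geq 4$. This is harmless: it suffices to invoke the form of the implicit function theorem requiring only $C^1$-dependence in $(\psi, V)$ together with continuity in $\eta$; equivalently, the reparametrization $t \definedas \eta^{1/(n-1)}$ restores polynomial dependence, or one can run a direct Banach contraction for the map $(\psi, V) \mapsto (\psi, V) - [DF(\psi_0, 0, 0)]^{-1}F(\psi, V, \eta)$ on a small ball around $(\psi_0, 0)$.
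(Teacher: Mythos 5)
Your proposal is correct and essentially identical to the paper's proof: your reparametrization $t=\eta^{1/(n-1)}$ turns your rescaled system into exactly the paper's $\mu$--deformed system (with $\mu^2\tau^2$ and $\mu\,d\tau$), which is solved at $\mu=0$ via the limiting Lichnerowicz equation, continued by the implicit function theorem with the same block upper-triangular linearization (invertible because $\cY(g)>0$ and there are no conformal Killing fields), and then undone by the same scaling $\phi=\mu^{2/(N-2)}\phitil$, $W=\mu^{(N+2)/(N-2)}\Wtil$, $\eta=\mu^{(N+2)/(N-2)}=\mu^{n-1}$. One small caution: the zeroth-order coefficient of the $(1,1)$ block need not be pointwise positive, since $\scal_g$ may change sign even when $\cY(g)>0$, so the trivial kernel should be justified by the coercivity of the conformal Laplacian (which you do invoke) rather than by a pointwise maximum principle.
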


Note that this theorem is not as good as Theorem \ref{thmMaxwell}. Indeed, $\eta_0$ depends a priori on $\sigmatil$
in an unknown way while Theorem \ref{thmMaxwell} asserts that the system \eqref{systemWithoutB} with $\sigma = \eta \sigmatil$
has a solution provided that $\left\|\sigma\right\|_{L^\infty} = \left|\eta\right| \left\|\sigmatil\right\|_{L^\infty}$
is small enough (less than some $\epsilon > 0$). So the corresponding $\eta_0$ would be
$\epsilon / \left\|\sigmatil\right\|_{L^\infty}$. Nevertheless, the proof appears to be constructive since
it relies on the sub- and super-solutions method and on the implicit function theorem. For the sake of clarity, we divide
the proof into several claims.

\begin{claim}\label{clLichneroLimit}
Let $\sigmatil \not\equiv 0$ be a TT-tensor belonging to $W^{1, p}(M, S^2(M))$. Then there exists a unique solution $\phitil_0 \in W^{2, p}(M, \bR)$
to the following equation
\begin{equation}\label{eqLichnerowicz0}
-\frac{4(n-1)}{n-2} \Delta_g \phitil + \scal_g \phitil = \left|\sigmatil\right|_g^2 \phitil^{-N-1},
\end{equation}
\end{claim}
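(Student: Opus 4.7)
The plan is to solve \eqref{eqLichnerowicz0} by the classical method of sub- and super-solutions, and then to prove uniqueness via the strict convexity of an associated energy functional.

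A preliminary normalization makes everything cleaner. Equation \eqref{eqLichnerowicz0} is conformally covariant: under the change $g \mapsto \theta^{N-2}g$ with $\theta > 0$, solutions $\phi$ for the data $(g,\sigmatil)$ correspond to solutions $\theta^{-1}\phi$ for the data $(\theta^{N-2}g, \theta^{-2}\sigmatil)$. Because $\cY(g) > 0$, solving the Yamabe problem allows me to assume that $\scal_g$ is a positive constant. The operator $L \definedas -\frac{4(n-1)}{n-2}\Delta_g + \scal_g$ then has strictly positive first eigenvalue, so $L : W^{2,p}(M,\bR) \to L^p(M,\bR)$ is an isomorphism that preserves positivity (strong maximum principle).

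For existence I would produce an ordered pair of sub- and super-solutions. A constant $\phi_+ \equiv C$ with $C$ chosen so that $\scal_g\, C^{N+2} \geqslant \|\sigmatil\|_{L^\infty}^2$ works as a super-solution. The delicate step---and what I expect to be the main obstacle---is producing a positive sub-solution, because the naive choice $\phi_- \equiv \epsilon$ fails wherever $|\sigmatil|_g^2$ vanishes, and the zero set of $\sigmatil$ may be large. The remedy is to invoke the linear problem: let $v \in W^{2,p}(M,\bR)$ be the unique solution of $Lv = |\sigmatil|_g^2$. Since $|\sigmatil|_g^2$ is non-negative and not identically zero, the maximum principle gives $v > 0$ everywhere. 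Then $\phi_- \definedas \epsilon v$ satisfies
\[
L\phi_- = \epsilon\, |\sigmatil|_g^2 \;\leqslant\; |\sigmatil|_g^2\,(\epsilon v)^{-N-1} = |\sigmatil|_g^2 \, \phi_-^{-N-1}
\]
as soon as $\epsilon^{N+2}\|v\|_{L^\infty}^{N+1} \leqslant 1$ (the inequality being trivial on $\{|\sigmatil|_g^2 = 0\}$). Shrinking $\epsilon$ further secures $\phi_- \leqslant \phi_+$. Standard monotone iteration for the semilinear operator $L\phi - |\sigmatil|_g^2\phi^{-N-1}$, whose nonlinearity is monotone in $\phi > 0$, produces a solution $\phitil_0 \in W^{2,p}(M,\bR)$ with $\phi_- \leqslant \phitil_0 \leqslant \phi_+$; interior elliptic regularity upgrades the Sobolev regularity automatically since the right-hand side lies in $L^\infty$.

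For uniqueness I would exploit the strict convexity on the positive cone of the functional
\[
E(\phi) = \int_M \left(\frac{2(n-1)}{n-2}|\nabla\phi|_g^2 + \frac{1}{2}\scal_g\,\phi^2 + \frac{1}{N}|\sigmatil|_g^2\, \phi^{-N}\right)\dv_g,
\]
whose Euler--Lagrange equation is precisely \eqref{eqLichnerowicz0}. Because $\scal_g > 0$ is constant and $t \mapsto t^{-N}$ is strictly convex on $(0,\infty)$, $E$ is strictly convex on $\{\phi \in W^{1,2}(M,\bR) : \phi > 0\}$, so it admits at most one critical point. Combined with the existence step, this yields the unique positive $W^{2,p}$ solution $\phitil_0$ claimed by the statement.
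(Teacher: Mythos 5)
Your proposal is correct, and its existence part follows essentially the same route as the paper: conformally normalize using $\cY(g)>0$ so that the scalar curvature is a positive constant, solve the auxiliary linear problem $L v=|\sigmatil|_g^2$, use the strong maximum principle (this is exactly where $\sigmatil\not\equiv 0$ enters, as in the paper) to get $v>0$, and then run the sub-/super-solution method. The only cosmetic difference there is the choice of barriers: you take a large constant super-solution and the sub-solution $\epsilon v$, whereas the paper rescales the same auxiliary function, taking $\phibar_-=(\max \ubar)^{-\frac{N+1}{N+2}}\ubar$ and $\phibar_+=(\min \ubar)^{-\frac{N+1}{N+2}}\ubar$; both choices verify the required differential inequalities for \eqref{eqLichConf}. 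Where you genuinely diverge is uniqueness: the paper subtracts two solutions, writes the nonlinearity difference as $-f\,(\phibar_0-\phibar'_0)$ with $f\geqslant 0$ via an integral mean-value identity, and concludes from the positivity of $\scal_{\gbar}+f$; you instead observe that \eqref{eqLichnerowicz0} is the Euler--Lagrange equation of $E(\phi)=\int_M\big(\tfrac{2(n-1)}{n-2}|\nabla\phi|_g^2+\tfrac12\scal_g\phi^2+\tfrac1N|\sigmatil|_g^2\phi^{-N}\big)\dv_g$, which is strictly convex on the positive cone because the quadratic part is positive definite when $\scal_g>0$. This is valid, with the small caveat that since the positive cone is not open in $W^{1,2}$ you should phrase ``at most one critical point'' by restricting $E$ to the segment $t\mapsto (1-t)\phi_1+t\phi_2$ joining two solutions (both are continuous and bounded away from zero, so $E$ is finite, differentiable and strictly convex there, and $\tfrac{d}{dt}E$ cannot vanish at both endpoints). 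Comparing the two: the paper's maximum-principle computation is more elementary, pointwise, and transfers verbatim to the Neumann boundary-value analogue used later (Claim 4'' with boundary condition $\partial_{\nubar}\phibar_0=0$), while your variational argument is shorter, avoids the mean-value manipulation, and makes transparent that uniqueness only needs the zeroth-order coefficient to be positive; either suffices for the claim.
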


\begin{proof}
The proof is standard, see \cite{MaxwellRoughCompact}. Note that this equation is nothing but the Lichnerowicz equation with
$\tau \equiv 0$ and $W \equiv 0$. To prove existence, we rely on the classical sub- and super-solutions method described,
for example, in \cite[Proposition 2]{Isenberg}. Since $\cY(g) > 0$, there exists a positive $W^{2,p}(M)$ function $\psi$ so
that the metric $\gbar = \psi^{N-2} g$ has positive constant scalar curvature. Setting $\phibar = \psi^{-1} \phitil$,
Equation \eqref{eqLichnerowicz0} transforms into
\begin{equation}\label{eqLichConf}
-\frac{4(n-1)}{n-2} \Delta_{\gbar} \phibar + \scal_{\gbar} \phibar = \big|\psi^{-2} \sigmatil\big|_{\gbar}^2 \phibar^{-N-1}.
\end{equation}
To solve \eqref{eqLichConf} for $\phibar$, we follows the method of sub- and super-solutions by constructing a sub-solution
$\phibar_-$ and a super-solution $\phibar^+$ as follows. Let $\ubar \in W^{2, p}(M)$ denote the solution to the following linear equation:
\[
-\frac{4(n-1)}{n-2} \Delta_{\gbar} \ubar + \scal_{\gbar} \ubar = \big| \psi^{-2} \sigmatil\big|_{\gbar} ^2.
\]
It follows from the strong maximum principle that $\ubar > 0$ in $M$. By setting
\[
\phibar_- = \left(\max \ubar\right)^{-\frac{N+1}{N+2}} \ubar
\]
and
\[
\phibar_+ = \left(\min \ubar\right)^{-\frac{N+1}{N+2}} \ubar,
\]
one readily checks that $\phibar_+$ and $\phibar_-$ are super- and sub-solutions for \eqref{eqLichConf} respectively, meaning that
\[
-\frac{4(n-1)}{n-2} \Delta_{\gbar} \phibar_- + \scal_{\gbar} \phibar_- \leqslant \left|\psi^{-2} \sigmatil\right|_{\gbar}^2 (\phibar_-)^{-N-1}
\]
and that
\[
-\frac{4(n-1)}{n-2} \Delta_{\gbar} \phibar_+ + \scal_{\gbar} \phibar_+ \geqslant \left|\psi^{-2} \sigmatil\right|_{\gbar}^2 (\phibar_+)^{-N-1}.
\]
Hence, there exists (at least) one solution $\phibar$ to Equation \eqref{eqLichConf} and it leads to a solution
$\phitil_0 = \psi \phibar$ to Equation \eqref{eqLichnerowicz0} as well. 

Uniqueness is also easy to prove. Indeed, let $\phitil_0$ and $\phitil'_0$ be two solutions to Equations
\eqref{eqLichnerowicz0} and denote $\phibar_0 = \psi^{-1} \phitil_0$ and $\phibar'_0 = \psi^{-1} \phitil'_0$.
A simple calculation leads us to the following equality:
\[\begin{aligned}
-\frac{4(n-1)}{n-2} \Delta_{\gbar} \big(\phibar_0 -\phibar'_0\big) & + \scal_{\gbar} \left(\phibar_0 - \phibar'_0\right)\\
 = & \left|\psi^{-2} \sigmatil\right|_{\gbar}^2 \left( \frac{1}{(\phibar_0)^{N+1}} - \frac{1}{(\phibar'_0)^{N+1}}\right)\\
 = &- \underbrace{(N+1) \left|\psi^{-2} \sigmatil\right|_{\gbar}^2 \int_0^1 \frac{dx}{\left(x \phibar_0 + (1-x) \phibar'_0\right)^{N+2}}}_{:= f}
  \left(\phibar_0 - \phibar'_0\right),
\end{aligned}\]
where the term $f$ is obviously non-negative. This then implies
\[
-\frac{4(n-1)}{n-2} \Delta_{\gbar} \left(\phibar_0 - \phibar'_0\right) + \left(\scal_{\gbar} + f\right)\left(\phibar_0 - \phibar'_0\right)
 = 0.
\]
Since $\scal_{\gbar} + f > 0$, we immediately conclude that $\phibar_0 - \phibar'_0 \equiv 0$. This proves the uniqueness
of the solution $\phitil_0$ as claimed.
\end{proof}

\begin{remark}
As can be seen, the existence of such a metric $\gbar$ in the proof of Claim \ref{clLichneroLimit} does not need the full
strength of the Yamabe theorem, we could only require that $\gbar$ has positive scalar curvature. However, this claim
strongly relies on the positivity of the Yamabe invariant $\cY(g)$. Indeed, assume that there exists a positive solution
$\phitil$ to Equation \eqref{eqLichnerowicz0}, the scalar curvature $\scal_{\ghat}$ of the metric $\ghat = \phi^{N-1} g$
satisfies
\[
\scal_{\ghat} = \phi^{1-N} \left(-\frac{4(n-1}{n-2} \Delta_g \phi + \scal_g \phi\right)
 = \phi^{-2N} \left|\sigmatil\right|^2_g.
\]
Hence, the scalar curvature of $\ghat$ is non-negative and not identically zero.
Thus, $\cY(g) = \cY(\ghat) > 0$. This partially explains why this method cannot be adapted to asymptotically
hyperbolic manifolds.
\end{remark}

Now, we introduce the following $\mu$--deformed system of \eqref{systemWithoutB}:
\begin{subequations}\label{DeformedsystemWithoutB}
\begin{empheq}[left=\empheqlbrace]{align}
-\frac{4(n-1)}{n-2} \Delta_g \phitil + \scal_g \phitil & =- \frac{n-1}{n} \tau^2 \mu^2 \phitil^{N-1} + \big|\sigmatil + \bL_g\Wtil\big|_g^2 \phitil^{-N-1}, \label{eqDeformedLich}\\
\DeltaL\Wtil & = \frac{n-1}{n} \phitil^N \mu d\tau. \label{eqDeformedVector}
\end{empheq}
\end{subequations}
Note that this system is obtained from \eqref{systemWithoutB} by changing the mean curvature $\tau$ simply by $\mu \tau$.

\begin{claim}\label{clDeformed}
There exists $\varepsilon > 0$ such that the system \eqref{DeformedsystemWithoutB} admits a solution
$(\phitil_\mu, \Wtil_\mu) \in W^{2,p}(M, \bR) \times W^{2,p}(M, T^*M)$ for all $\mu \in [0, \varepsilon)$.
\end{claim}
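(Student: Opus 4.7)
The plan is to apply the implicit function theorem to the map $F\colon W^{2,p}(M,\bR) \times W^{2,p}(M,T^*M) \times \bR \to L^p(M,\bR) \times L^p(M,T^*M)$ given by
\[
F(\phitil,\Wtil,\mu) = \begin{pmatrix} -\frac{4(n-1)}{n-2} \Delta_g \phitil + \scal_g \phitil + \frac{n-1}{n} \tau^2 \mu^2 \phitil^{N-1} - \big|\sigmatil + \bL\Wtil\big|_g^2 \phitil^{-N-1} \\[2pt] \DeltaL\Wtil - \frac{n-1}{n} \phitil^N \mu\, d\tau \end{pmatrix}
\]
around the base point $\mu = 0$. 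At $\mu = 0$ the system decouples: the vector equation admits $\Wtil_0 \equiv 0$ as a solution (since $g$ has no conformal Killing fields, this is the unique solution), and plugging $\Wtil_0 = 0$ into the first equation yields precisely Equation \eqref{eqLichnerowicz0}, which by Claim \ref{clLichneroLimit} has a unique positive solution $\phitil_0$. Thus $(\phitil_0, 0)$ is a solution at $\mu = 0$, and the only thing to check is that the partial differential $D_{(\phitil,\Wtil)} F(\phitil_0, 0, 0)$ is an isomorphism. Bootstrapping via standard elliptic regularity then upgrades the solution produced by the IFT to $W^{2,p}$.

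A direct computation shows that the linearization at $(\phitil_0, 0, 0)$ is block upper-triangular:
\[
D_{(\phitil,\Wtil)} F \cdot (h, V) = \begin{pmatrix} L h - 2\langle\sigmatil, \bL V\rangle_g \phitil_0^{-N-1} \\[2pt] \DeltaL V \end{pmatrix}, \quad L h \definedas -\frac{4(n-1)}{n-2}\Delta_g h + \scal_g h + (N+1)|\sigmatil|_g^2 \phitil_0^{-N-2} h.
\]
The lower block $\DeltaL$ is an isomorphism from $W^{2,p}(M, T^*M)$ to $L^p(M, T^*M)$ by the absence of conformal Killing fields. So I must invert the upper block $L$, which will be the key technical step.

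The operator $L$ is formally self-adjoint, so by Fredholm theory it suffices to show its kernel is trivial, equivalently that the quadratic form $Q(h) = \int_M\bigl(\tfrac{4(n-1)}{n-2}|dh|_g^2 + \scal_g h^2 + (N+1)|\sigmatil|_g^2 \phitil_0^{-N-2} h^2\bigr)\dv_g$ is positive definite. The standard trick is to substitute $h = \phitil_0 k$: after an integration by parts one obtains
\[
\int_M |dh|_g^2 \dv_g = \int_M \phitil_0^2 |dk|_g^2 \dv_g - \int_M k^2 \phitil_0 \Delta_g \phitil_0 \dv_g,
\]
and then using Equation \eqref{eqLichnerowicz0} for $\phitil_0$ to eliminate $-\tfrac{4(n-1)}{n-2}\phitil_0 \Delta_g \phitil_0 + \scal_g \phitil_0^2 = |\sigmatil|_g^2 \phitil_0^{-N}$, the cross-terms collapse to give
\[
Q(h) = \int_M \frac{4(n-1)}{n-2} \phitil_0^2 |dk|_g^2 \dv_g + (N+2)\int_M k^2 |\sigmatil|_g^2 \phitil_0^{-N}\dv_g.
\]
Both terms are non-negative, and vanishing of $Q(h)$ forces $k$ constant by the first term, then $k \equiv 0$ since $\sigmatil \not\equiv 0$. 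Hence $\ker L = \{0\}$ and $L$ is an isomorphism.

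With the two blocks invertible, $D_{(\phitil,\Wtil)} F(\phitil_0, 0, 0)$ is an isomorphism (the triangular structure makes inversion explicit), and the implicit function theorem produces $\varepsilon > 0$ and a smooth branch $\mu \mapsto (\phitil_\mu, \Wtil_\mu)$ of solutions to \eqref{DeformedsystemWithoutB} for $\mu \in (-\varepsilon, \varepsilon)$. Positivity of $\phitil_\mu$ is preserved for $\mu$ small by continuity since $\phitil_0 > 0$ (Claim \ref{clLichneroLimit} combined with the strong maximum principle), and restricting to $\mu \in [0, \varepsilon)$ yields the statement. The main obstacle is the coercivity argument for $L$; once the substitution $h = \phitil_0 k$ is identified, everything else is routine.
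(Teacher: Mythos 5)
Your proof is correct and follows the same skeleton as the paper's: the implicit function theorem applied at $\mu=0$ around the base point $(\phitil_0,0)$, the block upper-triangular linearization, invertibility of $\DeltaL$ from the absence of conformal Killing fields, and reduction of everything to the invertibility of the scalar block $L$ (the paper's $H$). The one genuine difference is how that invertibility is obtained. The paper argues directly from $\cY(g)>0$: the conformal Laplacian is positive definite, so $\int_M u\,H(u)\dv_g$ is the sum of a strictly positive term and the non-negative term $(N+1)\int_M|\sigmatil|_g^2\phitil_0^{-N-2}u^2\dv_g$, hence $\ker H=\{0\}$ without any use of $\sigmatil\not\equiv 0$ at this step. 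You instead substitute $h=\phitil_0 k$ and use Equation \eqref{eqLichnerowicz0} to collapse the quadratic form to $\frac{4(n-1)}{n-2}\int_M\phitil_0^2|dk|_g^2\dv_g+(N+2)\int_M k^2|\sigmatil|_g^2\phitil_0^{-N}\dv_g$; your computation is correct (this is the standard linearization-stability trick for the Lichnerowicz equation), and it buys independence from the sign of $\scal_g$ and of the conformal Laplacian — all it needs is the existence of the positive solution $\phitil_0$ and $\sigmatil\not\equiv 0$ — whereas the paper's argument is shorter given the standing hypothesis $\cY(g)>0$ (which is in any case already needed for Claim \ref{clLichneroLimit}). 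Two minor points: the map $F$ should be defined on the open set $W^{2,p}_+(M,\bR)$ of positive functions (as the paper does), since $\phitil^{-N-1}$ is otherwise undefined — your continuity remark handles this because $W^{2,p}\hookrightarrow C^0$ for $p>n$; and the final ``bootstrapping'' step is superfluous, since the implicit function theorem is already applied between $W^{2,p}$ and $L^p$ spaces, so the branch it produces is $W^{2,p}$ by construction.
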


\begin{proof}
The proof is based on the implicit function theorem. First, we define the operator 
\[
F: \bR \times W^{2, p}_+(M, \bR) \times W^{2, p}(M, T^*M) \to L^p(M, \bR) \times L^p(M, T^*M)
\] 
as follows:
\[{\setstretch{2.00}
F(\mu, \phitil, \Wtil) =
\begin{pmatrix} 
-\frac{4(n-1)}{n-2} \Delta_g \phitil + \scal_g \phitil + \frac{n-1}{n} \tau^2 \mu^2 \phitil^{N-1} - \big|\sigmatil + \bL_g\Wtil\big|_g^2 \phitil^{-N-1}\\
\DeltaL\Wtil - \frac{n-1}{n} \phitil^N \mu d\tau
\end{pmatrix}.}
\]
It is readily checked that $F$ is a $C^1$--mapping. Notice that
\[F(0, \phitil_0, 0) = \begin{pmatrix} 0\\0 \end{pmatrix},\] 
where $\phitil_0$ is the solution found in Claim \ref{clLichneroLimit}. All we need to do is to prove that the
partial derivative of $F$ with respect to $(\phitil, \Wtil)$ is an isomorphism at $(0, \phitil_0, 0)$. To this
end, we first observe that the differential $\mathcal D F_{(0, \phitil_0, 0)}$ is given by
\[\begin{split}
\mathcal D F_{(0, \phitil_0, 0)} &(0, \thetatil, \Ztil) \\
= &{\setstretch{2.00}
\left(
\begin{array}{c:c}
-\frac{4(n-1)}{n-2} \Delta_g + \scal_g + (N+1) \left|\sigmatil\right|_g^2 \phitil_0^{-N-2} & -2 \phitil_0^{-N-1} \big\langle\sigmatil, \bL_g \cdot\big\rangle\\
\hdashline 0 & \DeltaL 
\end{array}
\right)   } 
\begin{pmatrix} 
 \thetatil \\ 
 \Ztil 
\end{pmatrix} .\end{split}
\]
Note that $\mathcal DF_{(0, \phitil_0, 0)}(0, \thetatil, \Ztil)$ is triangular, meaning that the second
line of the $2$-by-$2$ block matrix above does not depend on $\thetatil$. Thus, the invertibility of
$\mathcal DF_{(0, \phitil_0, 0)}$ follows from the fact that the diagonal terms
\[
\begin{array}{rrll}
H: & W^{2, p}(M, \bR) & \to & L^p(M, \bR)\\
 & \thetatil & \mapsto & -\frac{4(n-1)}{n-2} \Delta_g \thetatil + \scal_g \thetatil + (N+1) \left|\sigmatil\right|_g^2 \phitil_0^{-N-2} \thetatil
\end{array}
\]
and
\[
\begin{array}{rrll}
V: & W^{2, p}(M, T^*M) & \to & L^p(M, T^*M)\\
 & \Ztil & \mapsto & \DeltaL\Ztil
\end{array}
\]
are invertible. Invertibility of $V$ follows from \cite[Proposition 5]{MaxwellNonCMC}, while $H$ is a
Fredholm map of index $0$. Since $\cY(g) > 0$, the conformal Laplacian is positive definite. Hence,
for any given $u \in W^{2, p}(M)$ with $u \not\equiv 0$, we calculate to obtain
\[
\begin{aligned}
\int_M u H(u) \dv_g
 = &\underbrace{
 \int_M \left(\frac{4(n-1)}{n-2} \left|du\right|^2_g + \scal_g u^2\right) \dv_g}_{> 0} \\
& + \underbrace{
 \int_M (N+1) \left|\sigmatil\right|_g^2 \phitil^{-N-2} u^2 \dv_g}_{\geqslant 0}
 > 0.
\end{aligned}
\]
Hence, $H$ has a trivial kernel. Thus, we have shown that $\mathcal DF_{(0, \phitil_0, 0)}$ is an
isomorphism as claimed.
\end{proof}

The last claim is just a straightforward calculation, therefore we omit its proof.

\begin{claim}\label{clScaling}
Set
\[
\left\lbrace
\begin{aligned}
\phi_\mu  & = \mu^{\frac{2}{N-2}} \phitil_\mu,\\
W_\mu   & = \mu^{\frac{N+2}{N-2}} \Wtil_\mu,\\
\sigma_\mu & = \mu^{\frac{N+2}{N-2}} \sigmatil.
\end{aligned}
\right.
\]
If $(\phitil_\mu, \Wtil_\mu)$ solves \eqref{DeformedsystemWithoutB}, the $(\phi_\mu, W_\mu)$ solves
\eqref{systemWithoutB} with $\sigma = \sigma_\mu$.
\end{claim}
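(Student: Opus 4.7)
The proof is a direct bookkeeping of exponents of $\mu$, so my plan is simply to substitute the rescaled fields into \eqref{eqLichnerowicz}--\eqref{eqVector} and check that each equation reduces to the corresponding equation of \eqref{DeformedsystemWithoutB}. I would treat the two equations in turn.

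For the vector equation, I would exploit the linearity of $\DeltaL$ and the fact that $\tau$ is untouched by the rescaling. Substituting gives
\[
\DeltaL W_\mu = \mu^{\frac{N+2}{N-2}} \DeltaL \Wtil_\mu, \qquad \tfrac{n-1}{n} \phi_\mu^N d\tau = \tfrac{n-1}{n} \mu^{\frac{2N}{N-2}} \phitil_\mu^N d\tau,
\]
and since $\tfrac{2N}{N-2} - \tfrac{N+2}{N-2} = 1$, the identity $\DeltaL \Wtil_\mu = \tfrac{n-1}{n} \mu \phitil_\mu^N d\tau$ from \eqref{eqDeformedVector} is equivalent to \eqref{eqVector} for $(\phi_\mu, W_\mu)$.

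For the Lichnerowicz equation, the linear left-hand side picks up a factor $\mu^{2/(N-2)}$. On the right-hand side, the momentum term scales by $\mu^{2(N+2)/(N-2)}\cdot \mu^{-2(N+1)/(N-2)} = \mu^{2/(N-2)}$, because $\sigma_\mu + \bL W_\mu = \mu^{(N+2)/(N-2)}(\sigmatil + \bL\Wtil_\mu)$ and $\phi_\mu^{-N-1} = \mu^{-2(N+1)/(N-2)} \phitil_\mu^{-N-1}$. The mean-curvature term scales by $\mu^{2(N-1)/(N-2)} = \mu^{2/(N-2)}\cdot \mu^2$, the extra $\mu^2$ being precisely the one that appears in \eqref{eqDeformedLich}. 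Dividing \eqref{eqDeformedLich} through by $\mu^{2/(N-2)}$ therefore produces \eqref{eqLichnerowicz} with data $\sigma = \sigma_\mu$ and unknowns $(\phi_\mu, W_\mu)$.

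There is no genuine obstacle; the only thing to keep in mind is the conceptual point underlying the exponents. The conformal constraint system is covariant under the one-parameter family of rescalings
\[
(\phi, W, \sigma, \tau) \longmapsto \bigl(\lambda\,\phi,\ \lambda^{(N+2)/2}\,W,\ \lambda^{(N+2)/2}\,\sigma,\ \lambda^{-(N-2)/2}\,\tau\bigr),
\]
and Claim \ref{clScaling} is exactly this symmetry applied with $\lambda = \mu^{2/(N-2)}$, in such a way that the induced rescaling of $\tau$ cancels the factor $\mu$ that was artificially introduced in \eqref{DeformedsystemWithoutB}. Once the exponents are recognized as the unique ones for which the system is scale-covariant, the verification is automatic.
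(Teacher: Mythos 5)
Your proposal is correct: the paper omits the proof of this claim precisely because it is the straightforward substitution and exponent check you carry out, and your verification of both equations (including the cancellation $\tfrac{2N}{N-2}-\tfrac{N+2}{N-2}=1$ and the factor $\mu^{2/(N-2)}$ in the Lichnerowicz equation) is accurate. The closing observation that this is the conformal-scaling covariance of the system applied with $\lambda=\mu^{2/(N-2)}$ is a nice way to see where the exponents come from, but it is an addition rather than a deviation from the paper's (omitted) argument.
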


Finally, the proof of Theorem \ref{thmClosed} follows by setting $\eta_0 = \varepsilon^{\frac{N+2}{N-2}}$,
where $\varepsilon$ is the constant appearing in Claim \ref{clDeformed}.

\begin{remark}\label{rkLimit}
It is quite appealing to use the deformation \eqref{DeformedsystemWithoutB} of the conformal constraint
equations to get a new proof of the limit equation criterion as in \cite{DahlGicquaudHumbert}. Indeed,
the system \eqref{DeformedsystemWithoutB} could be studied using the Leray--Schauder fixed point theorem,
which would allow $\mu$ to go up to $1$ (hence $\sigmatil$ would be set equal to the desired $\sigma$).
Assuming that the set of $(\phitil, \Wtil, \mu)$ solutions to \eqref{DeformedsystemWithoutB} with
$0 \leqslant \mu \leqslant 1$ is bounded, the Leray--Schauder theorem would guarantee that the system
\eqref{systemWithoutB} has (at least) one solution. If this set is unbounded, the argument presented
in Section \ref{secDGH} would lead to the existence of a non-trivial solution to Equation \eqref{eqLimit0}.
Hence, the main result of \cite{DahlGicquaudHumbert} could be strengthened, getting rid of the parameter
$\alpha$ (which appears because we introduce a different deformed system there). Such a result would
show that the methods of \cite{HNT1,HNT2} and \cite{DahlGicquaudHumbert} are two facets of a deeper
method. However, one serious difficulty appears in attempting this proof: one has to ensure that
if $\phitil$ (or $\Wtil$) diverges, $\mu$ stays away from 0.
\end{remark}

\subsection{A non-existence result}

The assumption on $\sigma$, namely that it has to be small but cannot be zero, looks weird at first
sight and one can wonder if the hypothesis $\sigma \not\equiv 0$ is purely technical. As can be seen
from \cite{IsenbergOMurchadha, MaxwellNonCMC, DahlGicquaudHumbert}, $\sigma$ is used to show that the
function $\phi$ solving the Lichnerowicz equation \eqref{eqLichnerowicz} is bounded away from zero.
We give a slight improvement of \cite{IsenbergOMurchadha} and \cite[Theorem 1.7]{DahlGicquaudHumbert}
to the class of metrics with non-negative Yamabe invariant showing that the assumption
$\sigma \not\equiv 0$ is needed.

As in \cite{DahlGicquaudHumbert}, the manifold $M$ is still assumed to admit no conformal Killing
vector fields. Recall that the proof presented in \cite{DahlGicquaudHumbert} depends on a Sobolev quotient
for the operator $\bL_g$, i.e. whenever $M$ admits no non-zero conformal Killing vector fields, the following holds:
\begin{equation}\label{eqSobolevQuotient}
C_g = \inf_{0 \not\equiv V \in W^{1, 2}(M, T^*M)} \frac{{{{\left( {\int_M {|\bL_g V|_g^2 \dv_g} } \right)}^{1/2}}}}{{{{\left( {\int_M {|V|_g^N \dv_g} } \right)}^{1/N}}}} > 0
\end{equation}
The main result in this subsection is the following.

\begin{theorem}\label{thmClosedNonExistence}
Assume that $g \in W^{2,p}(M, S^2(M))$ has non-negative Yamabe invariant $\cY(g)$ and $(M,g)$ has no conformal Killing
vector fields. If $\sigma \equiv 0$ and $\tau \in W^{1, p}(M, \bR)$, there exists a positive constant $\mathcal C(g)$
independent of $\tau \in W^{1, p}(M, \bR)$ such that if \[\left\|\frac{d\tau}{\tau} \right\|_{L^n} < \mathcal C,\]
then there is no solution $(\phi, W)$ to the system \eqref{systemWithoutB} with $\phi > 0$.
\end{theorem}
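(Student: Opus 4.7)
The plan is to argue by contradiction. Suppose $(\phi,W)$, with $\phi>0$, solves \eqref{systemWithoutB} for $\sigma\equiv 0$; I will show that necessarily $\|d\tau/\tau\|_n \geq \mathcal{C}(g)$ for an explicit $\mathcal{C}(g)>0$ depending only on the Sobolev constant $C_g$ of \eqref{eqSobolevQuotient}. Before doing anything, I would pass to a convenient conformal gauge: since $\cY(g)\geq 0$, by the Yamabe theorem the background metric $g$ can be conformally changed so as to have constant scalar curvature $\scal_g\geq 0$. The system \eqref{systemWithoutB} is conformally covariant, the hypothesis $\sigma\equiv 0$ is preserved, and the $L^n$-norm of the 1-form $d\tau/\tau$ is conformally invariant by a direct computation of how 1-form norms and volume forms transform under $g\mapsto\psi^{N-2}g$.

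The first ingredient is obtained by multiplying the Lichnerowicz equation \eqref{eqLichnerowicz} (with $\sigma\equiv 0$) by $\phi^{N+1}$ and integrating by parts. This yields
\begin{equation*}
\frac{4(n-1)(N+1)}{n-2}\int_M \phi^N|d\phi|_g^2 \dv_g + \int_M \scal_g \phi^{N+2}\dv_g + \frac{n-1}{n}\int_M \tau^2\phi^{2N}\dv_g = \int_M |\bL W|_g^2 \dv_g.
\end{equation*}
The exponent $N+1$ is chosen precisely so that the dangerous term $|\bL W|_g^2 \phi^{-N-1}$ on the right-hand side of the Lichnerowicz equation integrates to $\int |\bL W|^2$ with no weight. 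With $\scal_g\geq 0$, all three terms on the left are non-negative and can be discarded to yield
\begin{equation*}
\frac{n-1}{n}\int_M \tau^2\phi^{2N}\dv_g \leq \int_M |\bL W|_g^2 \dv_g.
\end{equation*}

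The second ingredient comes from pairing the vector equation \eqref{eqVector} with $W$ and integrating by parts, which gives $\tfrac{1}{2}\int|\bL W|^2 \dv_g = -\tfrac{n-1}{n}\int_M \phi^N \langle W, d\tau\rangle\dv_g$. Writing $d\tau = \tau\cdot(d\tau/\tau)$ and applying H\"older's inequality with the three exponents $N$, $n$, $2$ (which satisfy $\tfrac{1}{N}+\tfrac{1}{n}+\tfrac{1}{2}=1$), followed by the Sobolev inequality $\|W\|_N \leq C_g^{-1}\|\bL W\|_2$ from \eqref{eqSobolevQuotient}, I obtain
\begin{equation*}
\|\bL W\|_2^2 \leq \frac{2(n-1)}{n}\|W\|_N\left\|\frac{d\tau}{\tau}\right\|_n \|\tau\phi^N\|_2 \leq \frac{2(n-1)}{n C_g}\|\bL W\|_2\left\|\frac{d\tau}{\tau}\right\|_n\Big(\int_M \tau^2\phi^{2N}\dv_g\Big)^{1/2},
\end{equation*}
which after dividing by $\|\bL W\|_2$ and squaring reads $\int_M |\bL W|^2 \leq \tfrac{4(n-1)^2}{n^2 C_g^2}\|d\tau/\tau\|_n^2 \int_M \tau^2\phi^{2N}\dv_g$.

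Chaining this with the Lichnerowicz estimate and cancelling the factor $\int_M\tau^2\phi^{2N}\dv_g$ (which is strictly positive because $\phi>0$ and the $L^n$-integrability of $d\tau/\tau$ forces $\tau$ to be non-vanishing) yields $\|d\tau/\tau\|_n\geq \frac{C_g}{2}\sqrt{n/(n-1)}$, so $\mathcal{C}(g)\definedas\frac{C_g}{2}\sqrt{n/(n-1)}$ works. The delicate point in the whole plan is the sign control in the Lichnerowicz estimate: the test function $\phi^{N+1}$ is not the one compatible with the Yamabe quotient, so the $\scal_g$-term would a priori be of indefinite sign; it is exactly the preliminary conformal gauge change that lets us throw it away. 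Everything else is an application of H\"older's inequality and the hypothesized Sobolev inequality \eqref{eqSobolevQuotient}.
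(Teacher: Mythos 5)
Your overall strategy---multiply the Lichnerowicz equation by $\phi^{N+1}$ and integrate, pair the vector equation with $W$, then combine via H\"older with exponents $(2,n,N)$ and the Sobolev quotient \eqref{eqSobolevQuotient}---is exactly the paper's. The gap is in your preliminary reduction: the coupled system \eqref{systemWithoutB} is \emph{not} conformally covariant, so you may not assume without loss of generality that $\scal_g$ is constant and nonnegative while keeping $\sigma\equiv 0$ and the form $\bL W$ intact. Under $\gbar=\psi^{N-2}g$ the Lichnerowicz equation does transform well, but its source becomes $|\psi^{-2}\bL W|_{\gbar}^2$, where $\bL$ is still the conformal Killing operator of the \emph{original} metric applied to the original $W$. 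The tensor $\psi^{-2}\bL W$ is $\gbar$-trace-free and satisfies $\divg_{\gbar}(\psi^{-2}\bL W)=\frac{n-1}{n}\phibar^N d\tau$, but in general it is not of the form $\bL_{\gbar}\overline W$: its York decomposition with respect to $\gbar$ has a nonzero $\gbar$-TT part, so the hypothesis $\sigma\equiv 0$ is \emph{not} preserved by the conformal change (this non-covariance of the conformal method is exactly the point of \cite{MaxwellConformalMethod}, to which the paper refers immediately after this theorem; likewise $\divg_{\gbar}(\bL_{\gbar}W)\neq\psi^{-N}\divg_g(\bL_g W)$, so the vector equation is not preserved either). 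If you try to run your two estimates in the $\gbar$-gauge, the Lichnerowicz inequality picks up an uncontrolled term $\int_M|\overline\sigma|_{\gbar}^2\dv_{\gbar}$ from that TT part, and the argument no longer closes. (Your observation that $\|d\tau/\tau\|_n$ is conformally invariant is correct, but it does not rescue the reduction.)

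The paper's proof fixes exactly this point: the conformal change is used only on the Lichnerowicz equation, rewritten for $\phibar=\psi^{-1}\phi$ with source $|\psi^{-2}\bL W|_{\gbar}^2$ (equation \eqref{hamiltonian-transformed}), where the $\scal_{\gbar}$-term can be discarded; the vector-equation pairing and the Sobolev quotient $C_g$ are kept in the original metric $g$, where $\sigma\equiv 0$ genuinely holds. Translating the Lichnerowicz inequality back to $g$ introduces $\psi^{-N}$ weights, absorbed at the price of a factor $(\max\psi/\min\psi)^{N/2}$; this is why the constant obtained there is $\mathcal C=\frac12\sqrt{n/(n-1)}\,C_g\,(\min\psi/\max\psi)^{N/2}$ rather than your $\frac12\sqrt{n/(n-1)}\,C_g$, and why the paper remarks that its constant is smaller than the one of \cite{DahlGicquaudHumbert}. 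Your remaining steps (the choice of the exponent $N+1$, the H\"older--Sobolev chain, and the treatment of the degenerate case $\bL W\equiv 0$ via positivity of $\int_M\tau^2\phi^{2N}\dv_g$) are sound and coincide with the paper's.
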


Note that this allows (a priori) $\tau$ to have isolated non degenerate zeros. But, if $\tau$ changes sign, it
can be proven that $d\tau/\tau$ does not belong to any $L^p$ space for any $p \geqslant 1$. Hence, such a case
is out of reach from this theorem.

\begin{proof}
Let us first assume that the system \eqref{systemWithoutB} admits a solution $(\phi, W)$ with $\phi>0$
and $\sigma \equiv 0$. To prove the result, we denote by $\gbar$ the conformal metric $\psi^{N-2}g$ where
a positive function $\psi \in W^{2, p}(M, \bR)$ is chosen in such a way that $\scal_{\gbar}  \geqslant 0$.
Such a function $\psi$ exists since $\cY(g)\geqslant 0$. In terms of the metric $\gbar$, Equation
\eqref{systemWithoutB} becomes
\begin{equation}\label{hamiltonian-transformed}
 \begin{aligned}
-\frac{4(n-1)}{n-2} \Delta_{\gbar} (\psi ^{-1}\phi) + \scal_{\gbar} (\psi ^{-1}\phi) = &-\frac{n-1}{n} \tau^2 (\psi ^{-1}\phi)^{N-1}\\
&+ |\psi ^{-2}\bL_g W|_{\gbar} ^2 (\psi ^{-1}\phi)^{-N-1}.
 \end{aligned}
\end{equation}
Consequently, if we denote $\phibar \definedas \psi ^{-1}\phi$, multiply both sides of
\eqref{hamiltonian-transformed} by $\phibar^{N+1}$ and integrate both sides of the resulting
equation with respect to the conformal metric $\gbar$, we get
\begin{equation}\label{eq2}
\begin{aligned}
\frac{3n-2}{n-2} \int_M \big|d\phibar^{N/2+1}\big|^2_{\gbar} \dv_{\gbar} &+ \int_M \scal_{\gbar} \phibar^{N+2} \dv_{\gbar}
\\
& + \frac{n-1}{n} \int_M \tau^2 \phibar^{2N} \dv_{\gbar} = \int_M |\psi ^{-2}\bL_g W|_{\gbar}^2 \dv_{\gbar} .
\end{aligned}
\end{equation}
Under our conformal change $\gbar = \psi^{N-2}g$, there holds
\begin{equation}\label{eqConformalRules}
\begin{aligned}
\dv_{\gbar} &= \psi^{N} \dv_g ,\\
|{\psi ^{ - 2}}\bL_g W|_{\gbar}^2 &= {\psi ^{ - 2N}}|\bL_g W|_g^2.
\end{aligned}
\end{equation}
Therefore, in terms of the background metric $g$, \eqref{eq2} implies
\begin{equation}\label{eq3}
\frac{{n - 1}}{n}\int_M {{\tau ^2}} \psi^{-N} {\phi ^{2N}}\dv_g \leqslant \int_M {{\psi ^{ - N}}|\bL_g W|_g^2\dv_g} .
\end{equation}
Since $\psi \in W^{2, p}(M)$ is strictly positive, \eqref{eq3} immediately implies
\begin{equation}\label{eq4}
\int_M \tau ^2 \phi ^{2N} \dv_g \leqslant \frac{n}{n-1} \Big( \frac{\max \psi}{\min \psi} \Big)^N\int_M |\bL_g W|_g^2 \dv_g.
\end{equation}
We take the scalar product of the vector equation \eqref{eqVector} with $W$ and integrate over $M$ with respect to the background metric $g$ to get
\begin{equation}\label{eqIntegrationByParts}
- \frac{1}{2}\int_M | \bL_g W|_g^2 \dv_g = \frac{{n - 1}}{n}\int_M {{\phi ^N}\left\langle {d\tau ,W} \right\rangle } \dv_g.
\end{equation}
Using the H\"{o}lder inequality, we can estimate \eqref{eqIntegrationByParts} as follows:
\begin{equation}\label{grosCalcul}
\begin{aligned}
\frac{1}{2}\int_M |& \bL_g W|_g^2 \dv_g\\
  \leqslant & \frac{{n - 1}}{n} {\left( {\int_M {{\tau ^2}} {\phi ^{2N}}} \dv_g \right)^{1/2}} {\left( {\int_M {\Big| \frac{d\tau}{\tau} \Big|_g^n}} \dv_g \right)^{1/n}}{\left( {\int_M {|W|_g^{N}} } \dv_g \right)^{1/N}}\\
 \leqslant & \frac{{n - 1}}{n} \left(\frac{n}{n-1} \Big( \frac{\max \psi}{\min \psi} \Big)^N \int_M |\bL_g W|_g^2 \dv_g \right)^{1/2} \times \\
& \qquad \times \left( \int_M \Big| \frac{d\tau}{\tau} \Big|_g^n \dv_g \right)^{1/n} C_g^{-1} \left( \int_M |\bL_g W|_g^{2} \dv_g \right)^{1/2}\\
  \leqslant &\sqrt{\frac{{n - 1}}{n}} C_g^{-1} \Big(\frac{\max \psi}{\min \psi} \Big)^{N/2} \left( \int_M \Big| \frac{d\tau}{\tau} \Big|_g^n \dv_g \right)^{1/n} \int_M |\bL_g W|^{2} \dv_g.
\end{aligned}
\end{equation}
By setting
\[\mathcal C = \frac{1}{2} \sqrt{\frac{n}{n-1}} C_g \Big( \frac{\min \psi}{\max \psi} \Big)^{N/2},\]
one gets that
\[
\int_M \Big| \frac{d\tau}{\tau} \Big|_g^n \dv_g \geqslant \cC,
\]
unless
\[
\int_M |\bL_g W|_g^{2} \dv_g = 0.
\]
However, in the second case, we conclude from Inequality \eqref{eq3} that
\[
\int_M \tau^2 \psi^{-N} \phibar^{2N} \dv_g = 0.
\]
Hence $\phibar \equiv 0$ which contradicts the fact that $\phi > 0$. Thus, we have proved that if
$d\tau/\tau$ is small in the $L^n$--sense, the constraint equations \eqref{systemWithoutB} with vanishing
$\sigma$ admit no solution.
\end{proof}

Since our assumptions is weaker than those in \cite[Theorem 1.7]{DahlGicquaudHumbert}, for a price we pay,
the constant $\mathcal C$ appearing in Theorem \ref{thmClosedNonExistence} is smaller than the constant
appearing in \cite[Theorem 1.7]{DahlGicquaudHumbert}.

\section{The asymptotically Euclidean case}\label{secAE}
\renewcommand{\theclaim}{\arabic{claim}'}
\setcounter{claim}{0}

We now study the situation in the asymptotically Euclidean case. For relevant results
on Sobolev spaces on asymptotically Euclidean manifolds, we refer the reader to
\cite{Bartnik} or \cite{Maxwell}. See also the forthcoming article
\cite{DiltsGicquaudIsenberg}.

Let $(M^n, g)$ be a complete non-compact Riemannian manifold. We say that $(M, g)$ is
$W^{k,p}_\delta$--asymptotically Euclidean if there exist a compact set $K \subset M$, a real number $R > 0$,
and a diffeomorphism $\Psi: M \setminus K \to \bR^n \setminus B_R(0)$ such that, denoting $b$ the flat (background)
metric on $\bR^n$ and setting $e \definedas \Psi_* g - b$, we have
\[
\sum_{0 \leqslant i \leqslant k} \int_{\bR^n \setminus B_R} \big| \partial^{(i)} e\big|_b^p \left(1+|x|^2\right)^{-(\delta+ n/p -|i| )p/2} \dv_b(x) < \infty
\]
for some $k \geqslant 2$, $p > n$ and $\delta > 0$. Here, we denoted by $\partial^{(i)} e$ the $i$th order derivative (in the sense of distributions)
of $e$ and $\left| \partial^{(i)} e\right|_b$ its (pointwise) norm with respect to the Euclidean metric.

Given an asymptotically Euclidean manifold $(M, g)$ we denote by $r$ the pullback of the distance function from the origin in $\bR^n$:
$r = |\cdot| \circ \Psi$ and extend it to a positive continuous function on $K$. For any natural tensor bundle $E \to M$ and any section
$\xi \in \Gamma(E)$, we define the following weighted Sobolev norm:
\[
\left\|\xi\right\|_{W^{s, q}_\gamma(M, E)} \definedas \bigg(\sum_{0 \leqslant i \leqslant s} \int_M \big|\nabla^{(i)} \xi \big|_g^q \big(1+r^2\big)^{-(\gamma+ n/p -|i| )q/2} \dv_g\bigg)^{1/q},
\]
and the associated Sobolev space
\[
W^{s, q}_\delta (M, E) \definedas \left\{\xi \in W^{s, q}_{\rm loc}, \left\|\xi\right\|_{W^{s, q}_\delta(M, E)} < \infty\right\}.
\]
We also recall that the Yamabe invariant for an asymptotically Euclidean manifold $(M, g)$ is given by
\eqref{eqDefYamabe} even if the solution to the Yamabe problem in this case does not belong to
$W^{1, 2}$ since it tends to some positive constant at infinity.

We prove the following theorem.

\begin{theorem}\label{thmAE}
Let $(M, g)$ be a $W^{2, p}_\delta$--asymptotically Euclidean manifold for some $p > n$ and some
$\delta \in (2-n, 0)$. Assume that the Yamabe invariant $\cY(g)$ of the manifold $(M, g)$ is positive.
Then given any $\tau \in W^{1, p}_\delta(M, \bR)$, $\sigmatil \in W^{1, p}_\delta(M, S^2(M))$,
$\sigmatil \not\equiv 0$, and $\phitil_\infty \in \bR_+^*$, there exists $\eta_0 > 0$ such that for any
$\eta \in (0, \eta_0)$ there exists at least one solution to the system \eqref{eqLichnerowicz}--\eqref{eqVector}
with $\sigma = \eta \sigma_0$ and $(\phi - \eta^{2/(N-2)} \phitil_\infty, W) \in W^{2, p}_\delta(M, \bR) \times W^{2, p}_\delta(M, T^*M)$. 
\end{theorem}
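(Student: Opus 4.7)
The plan is to reprise the three-claim structure of the proof of Theorem \ref{thmClosed} in the weighted Sobolev setting $W^{2, p}_\delta$, with $\phitil$ prescribed to approach $\phitil_\infty$ at infinity and $\Wtil \in W^{2, p}_\delta$ decaying to zero. The scaling of Claim \ref{clScaling} then produces solutions of \eqref{systemWithoutB} with $\sigma = \eta \sigmatil$, and the asymptotic value of $\phi$ comes out to be a positive power of $\eta$ times $\phitil_\infty$, giving the desired decay of $\phi$ minus the asymptotic constant.

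As an analogue of Claim \ref{clLichneroLimit}, I would construct $\phitil_0 = \phitil_\infty + u_0$ with $u_0 \in W^{2, p}_\delta(M, \bR)$ solving
\[
-\frac{4(n-1)}{n-2} \Delta_g \phitil_0 + \scal_g \phitil_0 = \big|\sigmatil\big|_g^2 \phitil_0^{-N-1}.
\]
The argument is parallel to the closed case: since $\cY(g) > 0$, one may choose $\psi$ with $\psi - 1 \in W^{2, p}_\delta$ such that $\gbar = \psi^{N-2} g$ has positive scalar curvature, and apply the method of sub- and super-solutions in the weighted setting. The auxiliary linear solution $\ubar$ of $-\tfrac{4(n-1)}{n-2} \Delta_{\gbar} \ubar + \scal_{\gbar} \ubar = |\psi^{-2} \sigmatil|_{\gbar}^2$ should be sought in the affine space $c + W^{2, p}_\delta$ for a suitable positive constant $c$, so that after the rescaling of Claim \ref{clLichneroLimit} the sub-/super-solutions $\phibar_\pm$ possess the correct positive asymptotic value; the maximum principle on asymptotically Euclidean manifolds yields $\ubar > 0$ everywhere, and the sub-/super-solutions iteration closes the existence proof. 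Uniqueness is verbatim.

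Next, for the analogue of Claim \ref{clDeformed}, I would define
\[
F : \bR \times \bigl(\phitil_\infty + W^{2, p}_\delta(M, \bR)\bigr) \times W^{2, p}_\delta(M, T^*M) \to L^p_{\delta+2}(M, \bR) \times L^p_{\delta+2}(M, T^*M)
\]
by the same formula as in the closed case and apply the implicit function theorem at $(0, \phitil_0, 0)$. The differential is again block-triangular, so invertibility reduces to isomorphism properties of the vector Laplacian $\DeltaL$ and of the operator $H = -\tfrac{4(n-1)}{n-2} \Delta_g + \scal_g + (N+1) |\sigmatil|_g^2 \phitil_0^{-N-2}$ between $W^{2, p}_\delta$ and $L^p_{\delta+2}$. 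In the admissible weight range $\delta \in (0, n-2)$ both are isomorphisms on asymptotically Euclidean manifolds: $\DeltaL$ by the standard theory (see \cite{Maxwell, DiltsGicquaudIsenberg}), and $H$ because it is Fredholm of index $0$ with trivial kernel, by the same integration-by-parts computation as in Claim \ref{clDeformed} combined with the positivity of the conformal Laplacian coming from $\cY(g) > 0$. The scaling of Claim \ref{clScaling} then carries over unchanged, producing the desired $(\phi, W)$. The main obstacle is precisely to verify these two isomorphism statements under only the hypothesis $\cY(g) > 0$ and to build the sub-/super-solutions in Step~1 with the correct positive asymptotic constant; both points are controlled by the weighted-Sobolev machinery for asymptotically Euclidean manifolds but must be handled with some care to stay within the admissible weight range.
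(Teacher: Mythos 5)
Your overall architecture (three claims: solve the $\mu=0$ Lichnerowicz equation, apply the implicit function theorem in weighted spaces on the affine space $(\phitil_\infty,0)+W^{2,p}_\delta\times W^{2,p}_\delta$, then rescale) is exactly the paper's, and your treatment of Claims 2' and 3' is essentially what the paper does (it cites \cite[Theorem 5.4]{Maxwell} for $\DeltaL$ and otherwise says ``mutatis mutandis''). The genuine gap is in your analogue of Claim \ref{clLichneroLimitAE}: you cannot pin the asymptotic value of $\phitil_0$ at $\phitil_\infty$ with the barriers you propose. The closed-case barriers $\phibar_-=(\max\ubar)^{-\frac{N+1}{N+2}}\ubar$ and $\phibar_+=(\min\ubar)^{-\frac{N+1}{N+2}}\ubar$ are rescalings of the \emph{same} function $\ubar$ by \emph{different} constants, so if $\ubar\to c$ at infinity they tend to $(\max\ubar)^{-\frac{N+1}{N+2}}c$ and $(\min\ubar)^{-\frac{N+1}{N+2}}c$, which differ whenever $\ubar$ is non-constant (i.e.\ whenever $\sigmatil\not\equiv 0$). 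No choice of the constant $c$ makes both limits equal to $\phitil_\infty$, so the solution trapped between them only has its asymptotic constant squeezed into an interval; the conclusion $\phitil_0-\phitil_\infty\in W^{2,p}_\delta$ (and hence the uniqueness statement, which is uniqueness \emph{at fixed asymptotic value}) does not follow from your construction. A secondary point: on an asymptotically Euclidean manifold the scalar curvature necessarily decays, so a conformal gauge with ``positive scalar curvature'' in the sense used in the closed case is not available; what $\cY(g)>0$ buys (Maxwell's result cited in the paper) is a conformal factor $\psi$ with $\psi-1\in W^{2,p}_\delta$ making $\scal_{\gbar}\equiv 0$.

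The paper's fix is precisely to exploit that gauge: assuming $\scal_g\equiv 0$, write $\phitil=\phitil_\infty+\vtil$ and solve
\begin{equation*}
-\frac{4(n-1)}{n-2}\Delta_g \vtil=\frac{|\sigmatil|_g^2}{(\phitil_\infty+\vtil)^{N+1}}
\end{equation*}
for $\vtil\in W^{2,p}_\delta(M,\bR)$, taking $\vtil_-\equiv 0$ (i.e.\ the constant $\phitil_\infty$ itself, which is a subsolution because the scalar curvature term is absent) and, as supersolution, $\vtil_+\in W^{2,p}_\delta$ solving the linear Poisson equation with right-hand side $|\sigmatil|_g^2\,\phitil_\infty^{-N-1}$. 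Both barriers have the same limit ($0$ for $\vtil$, hence $\phitil_\infty$ for $\phitil_0$), which is what forces $\phitil_0-\phitil_\infty\in W^{2,p}_\delta$. Your Step 1 needs to be replaced by an argument of this kind (barriers sharing the asymptotic value $\phitil_\infty$); with that change, the rest of your outline goes through as in the paper, modulo the routine verification that the linearized operator $H$ has trivial kernel in the weighted class, which is cleanest in the zero-scalar-curvature gauge where its zeroth-order coefficient is nonnegative.
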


Note that the condition $\phi - \eta \phitil_\infty \in W^{2, p}_\delta(M, \bR)$ immediately implies that
$\phi \to \eta \phitil_\infty$ at infinity. The proof of this theorem mimics that of Theorem
\ref{thmClosed} replacing the $W^{k, p}$--spaces by the $W^{k, p}_\delta$ ones. We only give the
analogs of each of the four claims and a proof of the significantly different steps.

\begin{claim}\label{clLichneroLimitAE}
There exists a unique solution $\phitil_0$ to the equation \eqref{eqLichnerowicz0} such that
$\phitil_0 - \phitil_\infty \in W^{2, p}_\delta(M, \bR)$.
\end{claim}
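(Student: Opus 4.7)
The strategy is to follow the closed-case proof of Claim \ref{clLichneroLimit} almost verbatim, substituting the weighted Sobolev spaces $W^{k,p}_\delta$ for their unweighted counterparts. The only genuinely new feature is that the candidate solution must approach the prescribed positive constant $\phitil_\infty$ at infinity rather than being globally bounded.

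First I would perform a conformal reduction. Using $\cY(g) > 0$ together with standard asymptotically Euclidean Yamabe theory, choose $\psi > 0$ with $\psi - 1 \in W^{2,p}_\delta(M,\bR)$ such that $\gbar := \psi^{N-2} g$ has nonnegative scalar curvature, not identically zero. With $\phibar = \psi^{-1}\phitil$, equation \eqref{eqLichnerowicz0} becomes
\[
-\frac{4(n-1)}{n-2} \Delta_{\gbar}\phibar + \scal_{\gbar}\phibar = \left|\psi^{-2}\sigmatil\right|_{\gbar}^2 \phibar^{-N-1},
\]
and any solution with $\phibar - \phitil_\infty \in W^{2,p}_\delta$ pulls back to a $\phitil_0$ satisfying the desired asymptotic condition.

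Next I would construct ordered barriers. Since $\sigmatil \in W^{1,p}_\delta$ with $\delta > 0$, the source $|\psi^{-2}\sigmatil|_{\gbar}^2$ lies in a weighted $L^p$ space with sufficient decay, so by the isomorphism property of the conformal Laplacian on AE manifolds of positive Yamabe invariant, the linear problem
\[
-\frac{4(n-1)}{n-2}\Delta_{\gbar}\ubar + \scal_{\gbar}\ubar = \left|\psi^{-2}\sigmatil\right|_{\gbar}^2, \qquad \ubar - \phitil_\infty \in W^{2,p}_\delta,
\]
admits a unique solution, and the asymptotic maximum principle gives $\ubar > 0$ globally. Scaling $\ubar$ by suitable constants $\lambda_\pm$ determined by $\inf \ubar$, $\sup \ubar$ and $\phitil_\infty$ produces $\phibar_\pm = \lambda_\pm \ubar$ which are, respectively, super- and sub-solutions of the nonlinear equation with $\phibar_\pm - \phitil_\infty \in W^{2,p}_\delta$ and $\phibar_- \leqslant \phibar_+$. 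A standard monotone iteration in the weighted Sobolev framework then yields a solution $\phibar$ sandwiched between the two barriers. Uniqueness is the verbatim transcription of the calculation in Claim \ref{clLichneroLimit}: two solutions differ by a $W^{2,p}_\delta$ function, the integrated mean value produces the identity
\[
-\frac{4(n-1)}{n-2}\Delta_{\gbar}(\phibar_0 - \phibar'_0) + (\scal_{\gbar} + f)(\phibar_0 - \phibar'_0) = 0
\]
with $f \geqslant 0$, and the decay at infinity rules out boundary terms when multiplying by $\phibar_0 - \phibar'_0$ and integrating.

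The main obstacle is arranging a conformal gauge in which the linear theory is clean. The most common AE Yamabe-type statement produces $\scal_{\gbar} \equiv 0$, which is borderline for applying a strong maximum principle and isomorphism results with the right weights; one either perturbs $\psi$ to achieve $\scal_{\gbar} \geqslant 0$ with $\scal_{\gbar} \not\equiv 0$, or works directly with $g$ and exploits the fact that $\cY(g) > 0$ already guarantees injectivity of the conformal Laplacian on $W^{2,p}_\delta$, carrying out the barrier construction via a cut-off of $\phitil_\infty$ outside a large compact set in place of the simple multiplicative scaling used in the closed case. Once this gauge issue is resolved, the rest of the argument is routine and the bookkeeping for the weight $\delta \in (0, n-2)$ matches exactly the range needed for $|\sigmatil|^2$ to sit in the target weighted $L^p$ space.
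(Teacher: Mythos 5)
Your overall skeleton (conformal gauge, then sub- and super-solutions, then the mean-value uniqueness computation) is the one the paper follows, but your barrier construction has a genuine gap exactly at the point where the asymptotically Euclidean case differs from the closed case. If you solve the linear problem with $\ubar - \phitil_\infty \in W^{2,p}_\delta(M,\bR)$ and then set $\phibar_\pm = \lambda_\pm \ubar$ with $\lambda_+ = (\inf \ubar)^{-\frac{N+1}{N+2}}$ and $\lambda_- = (\sup \ubar)^{-\frac{N+1}{N+2}}$, these functions do satisfy the required differential inequalities, but they tend to $\lambda_\pm \phitil_\infty$ at infinity; since $\delta > 0$ forces elements of $W^{2,p}_\delta$ to decay, your assertion that $\phibar_\pm - \phitil_\infty \in W^{2,p}_\delta$ is false unless $\lambda_\pm = 1$, which in general destroys the sub/supersolution inequalities. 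The monotone iteration between such barriers then only traps the solution between two different constants at infinity and does not pin down the limit $\phitil_\infty$ --- and prescribing that limit is precisely the content of the claim. The ``cut-off of $\phitil_\infty$ outside a large compact set'' you mention as a fallback is the right instinct, but it is left entirely unexecuted, so as written the existence part does not go through.

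The paper's fix is a small but essential change of decomposition. It passes to the zero scalar curvature gauge $\gbar = \psi^{N-2} g$ with $\psi - 1 \in W^{2,p}_\delta$ (your worry that $\scal_{\gbar} \equiv 0$ is ``borderline'' is unfounded, and deforming instead to $\scal_{\gbar} \geqslant 0$, $\not\equiv 0$ is unnecessary), and then writes $\phitil = \phitil_\infty + \vtil$ with the unknown $\vtil \in W^{2,p}_\delta(M,\bR)$, so that the asymptotic value is built into the ansatz rather than into the barriers. For the resulting equation
\begin{equation*}
-\frac{4(n-1)}{n-2}\,\Delta_g \vtil = \left|\sigmatil\right|_g^2 \left(\phitil_\infty + \vtil\right)^{-N-1},
\end{equation*}
the function $\vtil_- \equiv 0$ is a subsolution, and the solution $\vtil_+ \in W^{2,p}_\delta$ of the linear Poisson equation with right-hand side $\left|\sigmatil\right|_g^2 \phitil_\infty^{-N-1}$ is positive by the maximum principle, hence a supersolution because $(\phitil_\infty + \vtil_+)^{-N-1} \leqslant \phitil_\infty^{-N-1}$. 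Both barriers decay, so the sub- and super-solution method directly produces $\phitil_0$ with $\phitil_0 - \phitil_\infty \in W^{2,p}_\delta$, with no separate argument needed to identify the limit at infinity. Your uniqueness argument is fine and matches the paper's.
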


\begin{proof}
To simplify the proof, we assume that the manifold $(M, g)$ has zero scalar curvature. This assumption
is harmless since it is known that any asymptotically Euclidean metric $g$ with positive Yamabe invariant
$\cY(g)$ is conformally related to a metric $\gbar = \psi^{N-2} g$ with zero scalar curvature with
$\psi - 1 \in W^{2, p}_\delta(M, \bR)$ (for instance, see \cite[Proposition 3]{Maxwell}). Hence, one can
proceed as in the proof of Claim \ref{clLichneroLimit}, working with metric $\gbar$ and replacing
$\left|\sigmatil\right|_g^2$ by $\left|\psi^{-2}\sigmatil\right|_g^2$.

To prove the existence part, we first decompose $\phitil = \phitil_\infty + \vtil$ and wish to look for
$\vtil$ in $W^{2, p}_\delta(M, \bR)$ solving the following PDE:
\begin{equation}\label{eqVtil}
- \frac{4(n-1)}{n-2} \Delta_g \vtil = \frac{\left|\sigmatil\right|_g^2}{\left(\phitil_\infty + \vtil\right)^{N+1}}.
\end{equation}
Note that $\vtil_- \equiv 0$ is always a subsolution to \eqref{eqVtil}. To construct a super-solution to
\eqref{eqVtil}, let $\vtil_+ \in W^{2, p}_\delta(M, \bR)$ denote the solution to the following Poisson equation:
\[
- \frac{4(n-1)}{n-2} \Delta_g \vtil_+ = \frac{\left|\sigmatil\right|_g^2}{\left(\phitil_\infty\right)^{N+1}}.
\] 
From the strong maximum principle it follows that $\vtil_+ > 0$. As a consequence, there holds
\[
- \frac{4(n-1)}{n-2} \Delta_g \vtil_+ \geqslant \frac{\left|\sigmatil\right|_g^2}{\left(\phitil_\infty + \vtil_+\right)^{N+1}},
\]
this is to say that $\vtil_+$ is a supersolution to \eqref{eqVtil}. The standard sub- and super-solutions
method applies giving rise to the existence of a solution $\phitil_0$ solving \eqref{eqLichnerowicz0} and
satisfying $\phitil_0 - \phitil_\infty \in W^{2, p}_\delta(M, \bR)$.

The proof of the uniqueness property is then entirely similar to the compact case, therefore we omit it.
\end{proof}

\begin{claim}\label{clDeformedAE}
There exists $\varepsilon > 0$ such that the system \eqref{DeformedsystemWithoutB} admits a solution
$(\phitil_\mu, \Wtil_\mu)$ such that $\phitil_\mu - \phitil_\infty \in W^{2, p}_\delta(M, \bR)$ and
$\Wtil_\mu \in W^{2, p}_\delta(M, T^*M)$ for all $\mu \in [0, \varepsilon)$.
\end{claim}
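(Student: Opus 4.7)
The plan is to follow the proof of Claim \ref{clDeformed} step by step, replacing every function space by its weighted $W^{k,p}_\delta$ analogue adapted to the asymptotically Euclidean setting. Concretely, I would apply the implicit function theorem to the map
\[
F: \bR \times \left(\phitil_\infty + W^{2, p}_\delta(M, \bR)\right) \times W^{2, p}_\delta(M, T^*M) \to L^p_{\delta+2}(M, \bR) \times L^p_{\delta+2}(M, T^*M)
\]
defined by exactly the same formula as in the compact case. That $F$ is $C^1$ is standard under the assumption $p > n$ and $\delta \in (0, n-2)$, and $F(0, \phitil_0, 0) = 0$ is immediate from Claim \ref{clLichneroLimitAE}. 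It then remains to prove that the partial derivative $\mathcal D F_{(0, \phitil_0, 0)}$ with respect to $(\phitil, \Wtil)$ is an isomorphism.

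As in the compact case this differential has a block-triangular structure, so its invertibility reduces to that of the two diagonal operators
\[
H = -\tfrac{4(n-1)}{n-2} \Delta_g + \scal_g + (N+1) |\sigmatil|_g^2 \phitil_0^{-N-2}
\quad \text{and} \quad V = \DeltaL,
\]
both viewed as maps $W^{2, p}_\delta \to L^p_{\delta+2}$. Invertibility of $V$ on weighted Sobolev spaces, under the assumption that $(M, g)$ carries no nontrivial conformal Killing fields, is the standard Fredholm result on asymptotically Euclidean manifolds (see for instance \cite{Maxwell}). For $H$, I would first reduce to $\scal_g \equiv 0$ by the same conformal change $\gbar = \psi^{N-2} g$ with $\psi - 1 \in W^{2, p}_\delta$ used in the proof of Claim \ref{clLichneroLimitAE}; weighted Fredholm theory for the flat Laplacian then gives that $-\Delta_g : W^{2, p}_\delta \to L^p_{\delta+2}$ is an isomorphism for every non-exceptional $\delta \in (0, n-2)$, and since the zeroth-order perturbation $(N+1) |\sigmatil|_g^2 \phitil_0^{-N-2}$ is non-negative with sufficient decay to be compact, $H$ is Fredholm of index $0$. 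Injectivity of $H$ follows by pairing any $u \in W^{2, p}_\delta$ in the kernel with itself: the decay $\delta > 0$ removes boundary contributions at infinity and integration by parts yields
\[
\int_M \left(\tfrac{4(n-1)}{n-2} |du|_g^2 + (N+1) |\sigmatil|_g^2 \phitil_0^{-N-2} u^2\right) \dv_g = 0,
\]
forcing $u \equiv 0$.

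The main obstacle compared to the compact proof is therefore not conceptual but technical: one must justify the Fredholm and isomorphism properties of $H$ and $V$ on weighted spaces, verify that the multiplicative perturbation appearing in $H$ is indeed compact, and choose $\delta$ non-exceptional inside the range $(0, n-2)$ where the weighted Laplacian is an isomorphism. Once this is in place the implicit function theorem produces a $C^1$ family of solutions $(\phitil_\mu, \Wtil_\mu)$ of \eqref{DeformedsystemWithoutB} for $\mu \in [0, \varepsilon)$ with $\phitil_\mu - \phitil_\infty \in W^{2, p}_\delta(M, \bR)$ and $\Wtil_\mu \in W^{2, p}_\delta(M, T^*M)$, which is the required conclusion.
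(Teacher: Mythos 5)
Your proposal is essentially the paper's own proof: the paper disposes of this claim by saying that the argument for Claim \ref{clDeformed} carries over mutatis mutandis once one works on the affine space $(\phitil_\infty, 0) + W^{2, p}_\delta(M, \bR) \times W^{2, p}_\delta(M, T^*M)$, citing \cite[Theorem 5.4]{Maxwell} for the relevant mapping properties of $\DeltaL$, which is exactly the scheme you follow (implicit function theorem, block-triangular differential, invertibility of the two diagonal operators). The additional weighted-space details you spell out (choice of target weights, Fredholm property and injectivity of $H$) only make explicit what the paper leaves implicit, so the approach matches.
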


\begin{proof}
The proof of Claim \ref{clDeformed} translates mutatis mutandis, the only difference being that we need
to work on the affine space $(\phitil_\infty, 0) + W^{2, p}_\delta(M, \bR) \times W^{2, p}_\delta(M, T^*M)$.
The relevant properties for the operator $\DeltaL$ on asymptotically Euclidean manifolds can be found
in \cite[Theorem 5.4]{Maxwell}.
\end{proof}

\begin{claim}\label{clScalingAE}
Set
\[
\left\lbrace
\begin{aligned}
\phi_\mu & = \mu^{\frac{2}{N-2}} \phitil_\mu,\\
W_\mu  & = \mu^{\frac{N+2}{N-2}} \Wtil_\mu,\\
\sigma_\mu & = \mu^{\frac{N+2}{N-2}} \sigmatil.
\end{aligned}
\right.
\]
If $(\phitil_\mu, \Wtil_\mu)$ solves \eqref{DeformedsystemWithoutB} with $\phitil_\mu \to \phitil_\infty$
at infinity, then $(\phi_\mu, W_\mu)$ solves \eqref{systemWithoutB} with $\sigma = \sigma_\mu$ and
$\phi_\mu \to \mu^{2/(N-2)} \phitil_\infty$ at infinity.
\end{claim}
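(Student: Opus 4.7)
The verification is a direct substitution, structurally identical to Claim \ref{clScaling} in the closed case; the only additional content in the asymptotically Euclidean setting is the behavior at infinity, which is immediate from the homogeneity of the rescaling. My plan is therefore to substitute $\phitil_\mu = \mu^{-2/(N-2)}\phi_\mu$, $\Wtil_\mu = \mu^{-(N+2)/(N-2)}W_\mu$, and $\sigmatil = \mu^{-(N+2)/(N-2)}\sigma_\mu$ into \eqref{eqDeformedLich}--\eqref{eqDeformedVector} and check that each term in both PDEs carries a common power of $\mu$ that can be divided out to recover \eqref{systemWithoutB} with $\sigma = \sigma_\mu$.

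For the Lichnerowicz equation \eqref{eqDeformedLich}, the four terms $\Delta_g \phitil_\mu$, $\scal_g \phitil_\mu$, $\tau^2 \mu^2 \phitil_\mu^{N-1}$, and $\big|\sigmatil + \bL \Wtil_\mu\big|_g^2 \phitil_\mu^{-N-1}$ all acquire the common factor $\mu^{-2/(N-2)}$: the first two because they are linear in $\phitil_\mu$; the third because $2 - 2(N-1)/(N-2) = -2/(N-2)$; and the fourth because, combining the rescaling of $\sigmatil$ and $\bL\Wtil_\mu$ with that of $\phitil_\mu^{-N-1}$, one finds $-2(N+2)/(N-2) + 2(N+1)/(N-2) = -2/(N-2)$. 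For the vector equation \eqref{eqDeformedVector}, both sides acquire the common factor $\mu^{-(N+2)/(N-2)}$: the left by linearity of $\DeltaL$, and the right because $1 - 2N/(N-2) = -(N+2)/(N-2)$. Dividing through by these common powers reproduces \eqref{systemWithoutB} exactly.

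The asymptotic statement is then immediate from Claim \ref{clDeformedAE}: since $\phitil_\mu - \phitil_\infty \in W^{2,p}_\delta(M,\bR)$, one has $\phi_\mu - \mu^{2/(N-2)}\phitil_\infty = \mu^{2/(N-2)}(\phitil_\mu - \phitil_\infty)$ still in the same weighted Sobolev space, giving $\phi_\mu \to \mu^{2/(N-2)}\phitil_\infty$ at infinity; likewise $W_\mu \in W^{2,p}_\delta(M,T^*M)$ and $\sigma_\mu \in W^{1,p}_\delta(M,S^2(M))$, since these spaces are stable under multiplication by the positive constant $\mu^{(N+2)/(N-2)}$. I anticipate no genuine obstacle: the exponents $2/(N-2)$ and $(N+2)/(N-2)$ are dictated by the dimensional scaling of the Hamiltonian and momentum constraints, and the only role of the claim is to record the transformation that converts a small-$\mu$ solution of the deformed system into a \emph{bona fide} solution of the original constraints with TT-tensor $\sigma_\mu$. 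Theorem \ref{thmAE} then follows by choosing $\eta_0 = \varepsilon^{(N+2)/(N-2)}$ with $\varepsilon$ as in Claim \ref{clDeformedAE} and matching the original data $\sigma = \eta\sigmatil$ via $\eta = \mu^{(N+2)/(N-2)}$.
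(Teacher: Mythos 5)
Your verification is correct and is exactly the ``straightforward calculation'' the paper deliberately omits (it skips the proof of Claim \ref{clScaling} and of its asymptotically Euclidean analog): the exponent bookkeeping for both equations checks out, and the observation that $\phi_\mu - \mu^{2/(N-2)}\phitil_\infty = \mu^{2/(N-2)}(\phitil_\mu - \phitil_\infty) \in W^{2,p}_\delta(M,\bR)$ is precisely how the paper reads off the behavior at infinity. No issues.
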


\section{The compact with boundary case}\label{secCWB}

\subsection{Boundary conditions}\label{secBoundaryConditions}
A natural issue in the study of the Einstein constraint equations is the construction of initial
data modeling black holes. While the definition of a black hole requires knowledge of the whole
solution $(\cM, \gbold)$ of the Einstein equations, it is natural to construct initial data containing
apparent horizons. For an overview, we refer the reader to \cite{ChruscielGallowayPollack}. A natural
way to construct such solutions is to excise the inside of the apparent horizon and thus construct
solutions to the constraint equations on the outside. As a consequence, we fix a manifold $M$
with boundary $\partial M$, solve the constraint equations on $M$ in such a way that $\partial M$
becomes an apparent horizon.

The first articles where such solutions to the constraint equations were constructed dealt with the
constant mean curvature case, see e.g. \cite{Maxwell, Gicquaud}. Very recently, people have turned
their attention to compact manifolds with boundary with a varying $\tau$, see for example
\cite{Dilts, HolstMeierTsogtgerel}. 

To go further, let us roughly reformulate this problem. For detailed explanation and calculations, we refer
the reader to \cite{Dilts, HolstMeierTsogtgerel, GicquaudHumbertNgo}. Let $\nuhat$ be the
(spacelike) unit normal vector field to $\partial M$ in $M$ pointing towards the outside of $M$ (hence to
the ``inside'' of the apparent horizon) and let $\nhat$ be the future directed unit normal spacetime vector
field to $M$. Then, by means of apparent horizon boundaries, in addition to the constraint equations
\eqref{eqConstraintEquations}, we further require
\begin{equation}\label{eqConstraintEquationsOnBoundary}
\left\lbrace
\begin{aligned}
\widehat{\Theta}_- & \leqslant 0,\\
\widehat{\Theta}_+ & = 0,
\end{aligned}
\right.
\end{equation}
where $\widehat{\Theta}_\pm$, known as the null expansion with respect to the null normal
$\ell_\pm \definedas \widehat n \mp \nuhat$, are given as follows:
\[
\widehat{\Theta}_\pm = \tr_\ghat \Khat - \Khat(\nuhat, \nuhat) \mp H_\ghat
\]
where $H_\ghat$ is the (unnormalized) mean curvature of $\partial M$ in $M$ evaluated with respect to
$\nuhat$, that is to say
\[ H_\ghat = \ghat^{ij} \nablahat_i \nuhat_j,\]
where we denote by $\nablahat$ the Levi-Civita connection for the metric $\ghat$.
Since we require $\widehat{\Theta}_+ \equiv 0$ on $\partial M$, the conditions can be rewritten as
\[
\left\lbrace
\begin{aligned}
\tr_\ghat \Khat - \Khat(\nuhat, \nuhat) & = \frac{\widehat{\Theta}_+ + \widehat{\Theta}_-}{2} = \frac{\widehat{\Theta}_-}{2},\\
H_\ghat & = \frac{\widehat{\Theta}_- - \widehat{\Theta}_+}{2} = \frac{\widehat{\Theta}_-}{2}.
\end{aligned}
\right.
\]
On the other hand, recalling that $\ghat = \phi^{N-2} g$, one has the following formula relating:
$H_\ghat$ and $H_g$:
\[
\frac{2(n-1)}{n-2}\partial_\nu \phi + H_g \phi = H_\ghat \phi^{N/2},
\]
where $\nu = \phi^{N/2-1} \nuhat$ is the unit vector field normal to $\Sigma$ calculated with respect
to the metric $g$. Hence, we get the following condition for $\phi$:
\begin{equation}\label{eqBoundaryOfPhiInTermOfTheta}
\frac{2(n-1)}{n-2}\partial_\nu \phi + H_g \phi = \frac{\widehat\Theta_-}{2} \phi^{N/2}.
\end{equation}
Next, thanks to $\tr_\ghat \Khat = \tau$ and the fact that
\[
\Khat(\nuhat, \nuhat) = \frac \tau n + (\sigma + \bL_g W)(\nu, \nu) \phi^{-N},
\] 
we obtain the following identity:
\begin{equation}\label{eqBoundaryNuNu}
\frac{\widehat{\Theta}_-}{2} = \frac {n-1}{n} \tau - (\sigma + \bL_g W)(\nu, \nu) \phi^{-N}.
\end{equation}
Contrary to \eqref{eqBoundaryOfPhiInTermOfTheta}, this does not give a boundary condition that complements Equation
\eqref{eqVector}. In this context, it is natural to prescribe $(\sigma + \bL_g W)(\nu, \cdot)$ as follows:
\begin{equation}\label{eqBoundaryW}
(\sigma + \bL_g W)(\nu, \cdot) = \bigg(\frac {n-1}{n} \tau - \frac{\widehat{\Theta}_-}{2}\bigg) \phi^N \nu^\flat + \xi
\end{equation}
where $\xi$ is a $1$--form on $\partial M$ which we extend to the restriction of $TM$ to $\partial M$ by
setting $\xi(\nu) = 0$ so that Condition \eqref{eqBoundaryNuNu} is satisfied. Also in \eqref{eqBoundaryW},
we use $\nu^\flat$ to denote the $1$--form dual to the normal vector field $\nu$ which is given by
$\nu^\flat (X) = g(\nu, X)$ for any vector field $X$ on $\partial M$. Having all discussion
above, we are now in a position to write down the following system of PDEs:
\begin{equation}\label{systemWithB}
\left\lbrace
\begin{aligned}
-\frac{4(n-1)}{n-2} \Delta_g \phi + \scal_g \phi & = - \frac{n-1}{n} \tau^2 \phi^{N-1} + \big| \sigma + \bL_g W \big|_g^2 \phi^{-N-1},\\
\DeltaL W & = \frac{n-1}{n} \phi^N d\tau,\\
\frac{2(n-1)}{n-2}\partial_\nu \phi + H_g \phi &= \frac{\widehat\Theta_-}{2} \phi^{N/2},\\
(\sigma + \mathbb{L}W)(\nu, \cdot) &= \bigg({\frac{{n - 1}}{n}\tau - \frac{\widehat\Theta_-}{2}} \bigg){\phi^N} \nu^\flat + \xi,
\end{aligned}
\right.
\end{equation}
where the given data are now $(M, g)$ a compact Riemannian manifold with boundary $\partial M$,
$\tau$ a function on $M$, $\sigma$ a TT-tensor, $\widehat{\Theta}_-$ a nonpositive function on
$\Sigma = \partial M$ and $\xi \in \Gamma(\partial M, T^*M)$ a 1--form.

In the presence of the boundary $\partial M$, instead of using the sign of $\cY(g)$,
we use the sign of the Yamabe invariant $\cY(g, \partial M)$ introduced by Escobar \cite{Escobar}:
\[
\cY(g, \partial M) \definedas \inf_{0 \not\equiv \phi \in W^{1, 2}(M, \bR)} \frac{\int_M \big( \frac{4(n-1)}{n-2} |d\phi|^2_g + \scal_g \phi^2 \big) \dv_g + \int_{\partial M} H_g \phi^2 \ds_g}
  {\left(\int_M \phi^N \dv_g\right)^{N/2}}.
\]
We also comment on the York splitting on compact manifolds with boundary. While on closed manifolds
we have that the set of (say) $W^{1, 2}$--TT-tensors is $L^2$--orthogonal to the set $\{\bL_g W, W \in W^{2,2}(M, T^*M)\}$,
this is no longer true for compact manifolds with boundary. Indeed, let $\sigma$ be a TT-tensor and
$W$ be an arbitrary $1$--form, then if we denote by $W^\sharp$ the vector field dual to the $1$--form $W$, then by a direct calculation together with the Stokes theorem, we have
\[
\begin{aligned}
\int_M \left\<\sigma, \bL_g W\right\> \dv_g
 & = 2 \int_M \left\<\sigma, \nabla W\right\> \dv_g\\
 & = 2 \int_M \divg \left(\sigma(W^\sharp, \cdot)\right)\dv_g - 2 \int_M (\divg \sigma)(W^\sharp) \dv_g\\
 & = 2 \int_{\partial M} \sigma(W^\sharp, \nu) \ds_g,
\end{aligned}
\]
where $\tr_g \sigma=0$ and $\divg_g \sigma=0$ were also used to obtain the first and last lines respectively. Since the restriction of $W$ to $\partial M$ can be arbitrary, $\sigma$ belongs to the orthogonal of
the set of $\bL_g W$'s if and only if we also impose that $\sigma(\nu, \cdot) \equiv 0$ on $\partial M$.
We will make this assumption from now on.

\subsection{Existence result}

\renewcommand{\theclaim}{\arabic{claim}''}
\setcounter{claim}{-1}

The main result of this subsection is the following.

\begin{theorem}\label{thmCompactWithB} 
Let $M$ be a compact manifold with boundary. Given $p > n$, let $g \in W^{2, p}(M, S^2(M))$, $\tau \in W^{1, p}(M, \bR)$,
and $\sigmatil \in W^{1, p}(M, S^2(M))$, $\widehat{\Theta}_- \in W^{1-1/p, p}(\partial M, \bR)$,
$\xitil \in W^{1-1/p, p}(\partial M, T^*M)$ be given data, where $\sigmatil$ is a TT-tensor such that
$\sigmatil(\nu, \cdot) \equiv 0$ on $\partial M$. Assume that the Escobar invariant $\cY(g, \partial M)$
is strictly positive, that $g$ has no conformal Killing vector fields and either $\sigmatil \not\equiv 0$ or
$\xitil \not\equiv 0$. There exists $\eta_0 > 0$ such that for any $\eta \in (0, \eta_0)$ there exists at least
one solution $(\phi, W) \in W^{2, p}(M, \bR) \times W^{2, p} (M, T^*M)$ to the system \eqref{systemWithB} with
$\sigma = \eta \sigmatil$ and $\xi = \eta \xitil$.
\end{theorem}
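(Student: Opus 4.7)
The plan is to replay the three-step argument of Theorem \ref{thmClosed} --- degenerate limit, implicit function theorem, rescaling --- but with the boundary data also included in the deformation. Substituting the scaling of Claim \ref{clScaling} together with $\xi = \mu^{(N+2)/(N-2)}\xitil$ into \eqref{systemWithB} (keeping $\widehat\Theta_-$ unchanged) leaves the interior equations as in \eqref{DeformedsystemWithoutB} and transforms the two boundary conditions into
\begin{align*}
\tfrac{2(n-1)}{n-2}\partial_\nu \phitil + H_g \phitil &= \mu\,\tfrac{\widehat\Theta_-}{2}\,\phitil^{N/2},\\
(\sigmatil + \bL \Wtil)(\nu, \cdot) &= \mu\Bigl(\tfrac{n-1}{n}\tau - \tfrac{\widehat\Theta_-}{2}\Bigr)\phitil^N \nu^\flat + \xitil,
\end{align*}
so that at $\mu = 0$ the nonlinear terms drop out of both the interior Lichnerowicz equation and of both boundary conditions.

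At $\mu = 0$ the system decouples. Since $\sigmatil(\nu,\cdot) \equiv 0$, the vector equation reduces to the linear Neumann-type problem $\DeltaL \Wtil_0 = 0$ in $M$ with $\bL \Wtil_0(\nu, \cdot) = \xitil$ on $\partial M$, which admits a unique $W^{2,p}$ solution by standard $\DeltaL$-theory for compact manifolds with boundary (as in \cite{HolstMeierTsogtgerel}) and the absence of conformal Killing vector fields. The source $|\sigmatil + \bL \Wtil_0|_g^2$ of the limit Lichnerowicz equation is not identically zero: if $\xitil \not\equiv 0$, this is immediate from the boundary condition on $\Wtil_0$; otherwise $\sigmatil \not\equiv 0$ by hypothesis, and an integration by parts forces $\bL \Wtil_0 \equiv 0$ so the source equals $|\sigmatil|_g^2 \not\equiv 0$. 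One then solves
$$-\tfrac{4(n-1)}{n-2}\Delta_g \phitil_0 + \scal_g \phitil_0 = |\sigmatil + \bL \Wtil_0|_g^2\, \phitil_0^{-N-1}$$
with Robin boundary condition $\tfrac{2(n-1)}{n-2}\partial_\nu \phitil_0 + H_g \phitil_0 = 0$ by the sub- and super-solutions method of Claim \ref{clLichneroLimit}, after first conformally changing $g$ to a metric with positive scalar curvature and minimal boundary --- which is available because $\cY(g, \partial M) > 0$, by Escobar's boundary Yamabe theory. Uniqueness transfers verbatim from the closed case.

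The implicit function theorem is then applied to the map
\[
F : \bR \times W^{2,p}_+(M,\bR) \times W^{2,p}(M,T^*M) \to L^p(M)\times L^p(M,T^*M)\times W^{1-1/p,p}(\partial M)\times W^{1-1/p,p}(\partial M,T^*M)
\]
collecting the residuals of the two interior equations and the two boundary conditions. Since both boundary right-hand sides are proportional to $\mu$, the partial differential $\mathcal{D} F_{(0, \phitil_0, \Wtil_0)}$ with respect to $(\phitil, \Wtil)$ is block lower-triangular. The lower-right block $\Wtil \mapsto (\DeltaL \Wtil, \bL\Wtil(\nu, \cdot))$ is an isomorphism by the absence of conformal Killing fields; the upper-left block is the linearised Lichnerowicz operator with Robin boundary condition and a non-negative zeroth order potential, whose associated quadratic form dominates the Escobar functional of $\cY(g, \partial M)$, hence is strictly positive, so its kernel is trivial and Fredholm-index-zero forces invertibility. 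The implicit function theorem then produces a $C^1$ branch $(\phitil_\mu, \Wtil_\mu)$ defined for $\mu \in [0, \varepsilon)$, and the rescaling of Claim \ref{clScaling} (augmented by $\xi = \mu^{(N+2)/(N-2)}\xitil$) yields the required solutions of \eqref{systemWithB}, with $\eta_0 = \varepsilon^{(N+2)/(N-2)}$.

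The step I expect to be the most delicate is the coercivity of the linearised Lichnerowicz operator under the Robin boundary condition $\tfrac{2(n-1)}{n-2}\partial_\nu + H_g = 0$: the boundary coefficient appearing after integration by parts does not match exactly the one in the definition of $\cY(g, \partial M)$, and one has to verify carefully that the signs are aligned so that positivity of the Escobar invariant really implies strict positivity of the relevant quadratic form. A secondary technical point is the verification that $(\DeltaL, \bL\cdot(\nu,\cdot))$ is an isomorphism on the stated weighted Sobolev spaces in the absence of conformal Killing fields, which should follow from standard elliptic boundary theory.
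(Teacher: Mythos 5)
Your proposal follows essentially the same route as the paper: the same $\mu$--deformed system with deformed boundary conditions, the same limit solutions $\Wtil_0$ and $\phitil_0$ (the paper establishes $|\sigmatil+\bL\Wtil_0|_g^2\not\equiv 0$ via the orthogonality identity $\int_M|\sigmatil+\bL\Wtil_0|_g^2\dv_g=\int_M|\sigmatil|_g^2\dv_g+\int_M|\bL\Wtil_0|_g^2\dv_g$, which is equivalent to your case analysis), the same triangular implicit-function-theorem step at $(0,\phitil_0,\Wtil_0)$, and the same rescaling giving $\eta_0=\varepsilon^{(N+2)/(N-2)}$. The Robin-coercivity point you flag is handled just as in Claim \ref{clLichneroLimitWithB}: by conformal covariance of the conformal Laplacian together with its boundary operator, one may test triviality of the kernel in Escobar's gauge $\gbar=\psi^{N-2}g$ with $\scal_{\gbar}>0$ and $H_{\gbar}\equiv 0$, where the boundary condition becomes pure Neumann and the quadratic form is manifestly positive (the paper itself simply asserts invertibility of the diagonal blocks).
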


We initiate the proof of Theorem \ref{thmCompactWithB} by proving that the right hand side of the analog of Equation
\eqref{eqLichnerowicz0} (see Equation \eqref{System1} below) is actually non-zero.

\begin{claim}\label{clPhiNonZero}
Let $\Wtil_0 \in W^{2, p}(M, T^*M)$ be the unique solution of
\[
\left\lbrace
\begin{aligned}
\DeltaL \Wtil_0 & = 0,\\
\bL_g(\nu, \cdot) & = \xitil.
\end{aligned}
\right.
\]
Then under the assumptions of Theorem \ref{thmCompactWithB}, we have
\[
\big|\sigmatil + \bL_g \Wtil_0\big|_g^2 \not\equiv 0.
\]
\end{claim}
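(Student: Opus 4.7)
The plan is to argue by contradiction: suppose $\sigmatil + \bL \Wtil_0 \equiv 0$ on $M$, i.e., $\bL \Wtil_0 = -\sigmatil$. The goal is to show that, under the hypotheses of Theorem \ref{thmCompactWithB}, this forces both $\sigmatil \equiv 0$ and $\xitil \equiv 0$, contradicting the standing assumption that at least one of the two does not vanish identically.

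The key input is the boundary integration by parts identity established at the end of Section \ref{secBoundaryConditions}, which for any TT-tensor $\sigmatil$ and any $1$--form $W$ reads
\[
\int_M \<\sigmatil, \bL W\> \dv_g = 2 \int_{\partial M} \sigmatil(W^\sharp, \nu) \ds_g.
\]
Applied with $W = \Wtil_0$, the additional boundary hypothesis $\sigmatil(\nu, \cdot) \equiv 0$ on $\partial M$ makes the right-hand side vanish, so $\int_M \<\sigmatil, \bL \Wtil_0\> \dv_g = 0$. Substituting the contradiction hypothesis $\bL \Wtil_0 = -\sigmatil$ then gives
\[
\int_M |\sigmatil|_g^2 \dv_g = -\int_M \<\sigmatil, \bL \Wtil_0\> \dv_g = 0,
\]
so $\sigmatil \equiv 0$ on $M$.

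Once $\sigmatil \equiv 0$, the contradiction hypothesis becomes $\bL \Wtil_0 \equiv 0$, meaning $\Wtil_0$ is dual to a conformal Killing vector field of $g$. Since $g$ admits no non-trivial conformal Killing vector fields, we obtain $\Wtil_0 \equiv 0$ on $M$, whence the boundary condition $\bL \Wtil_0(\nu, \cdot) = \xitil$ forces $\xitil \equiv 0$. Thus both $\sigmatil$ and $\xitil$ vanish identically, contradicting the hypothesis of Theorem \ref{thmCompactWithB}. The argument is essentially bilinear; the only place where anything delicate happens is the integration by parts, which crucially exploits the orthogonality condition $\sigmatil(\nu, \cdot) \equiv 0$ on $\partial M$ (this was precisely the reason for imposing this condition at the end of Section \ref{secBoundaryConditions}) and the absence of conformal Killing fields to pass from $\bL \Wtil_0 \equiv 0$ to $\Wtil_0 \equiv 0$.
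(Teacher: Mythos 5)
Your argument is correct and rests on exactly the same two ingredients as the paper's proof, namely the $L^2$--orthogonality $\int_M \langle \sigmatil, \bL \Wtil_0\rangle \dv_g = 0$ coming from the boundary integration by parts together with $\sigmatil(\nu,\cdot)\equiv 0$, and the absence of conformal Killing vector fields. The paper merely organizes this directly, via the Pythagorean identity $\int_M |\sigmatil + \bL \Wtil_0|_g^2 \dv_g = \int_M |\sigmatil|_g^2 \dv_g + \int_M |\bL \Wtil_0|_g^2 \dv_g$, instead of your contradiction framing; the two are equivalent.
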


\begin{proof}
The existence, the uniqueness, and the regularity of $\Wtil_0$ are proved in \cite[Theorem 4.5]{HolstMeierTsogtgerel}.
See also \cite[Proposition 5.1]{Maxwell} and \cite[Theorem 8.6]{Gicquaud} for earlier references. From the remark
at the end of Subsection \ref{secBoundaryConditions}, we have
\[
\int_M \big|\sigmatil + \bL_g \Wtil_0\big|_g^2 \dv_g = \int_M \left|\sigmatil\right|_g^2 \dv_g + \int_M \big|\bL_g \Wtil_0\big|_g^2 \dv_g.
\]
Hence if $\sigmatil \not\equiv 0$, the claim follows. Otherwise if $\xitil \not\equiv 0$, $\Wtil_0$ is a non-trivial
element of $W^{2, p}(M, T^*M)$. Since $(M, g)$ has no non-zero conformal Killing vector field, it follows that
\[\int_M \big|\bL_g \Wtil_0\big|_g^2 \dv_g > 0,\] 
which proves the claim.
\end{proof}

\begin{claim}\label{clLichneroLimitWithB}
Under the assumptions of Theorem \ref{thmCompactWithB}, there exists a unique solution $\phitil_0 \in W^{2, p}(M, \bR)$
to the following system:
\begin{equation}\label{System1}
\left\lbrace
 \begin{aligned}
-\frac{4(n-1)}{n-2} \Delta_g \phitil_0+ \scal_g \phitil_0 &= \big|\sigmatil + \bL_g \Wtil_0\big|_g^2 \phitil_0^{-N-1},\\
\frac{2(n-1)}{n-2}\partial_\nu \phitil_0 +H_g \phitil_0 &=0.
 \end{aligned}
\right.
\end{equation}
\end{claim}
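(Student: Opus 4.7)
The proof will closely parallel that of Claim \ref{clLichneroLimit}, with the extra care needed to accommodate the boundary condition. The guiding idea is to conformally transform the system to one whose boundary term disappears, and then apply the sub- and super-solutions method in this simpler form.

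Since the Escobar invariant $\cY(g, \partial M)$ is strictly positive, one can invoke Escobar's resolution of his version of the Yamabe problem to conformally deform $g$ to a metric $\gbar = \psi^{N-2} g$ with $\psi \in W^{2,p}(M, \bR)$ positive, such that $\scal_{\gbar} > 0$ on $M$ and $H_{\gbar} \equiv 0$ on $\partial M$ (or at least $\scal_{\gbar} \geqslant 0$ with strict positivity somewhere). Setting $\phibar = \psi^{-1} \phitil_0$ and using the standard conformal transformation laws for the conformal Laplacian and for the operator $\frac{2(n-1)}{n-2}\partial_\nu + H_g$, the system \eqref{System1} becomes
\begin{equation*}
\left\lbrace
\begin{aligned}
-\frac{4(n-1)}{n-2} \Delta_{\gbar} \phibar + \scal_{\gbar} \phibar &= \big|\psi^{-2}(\sigmatil + \bL \Wtil_0)\big|_{\gbar}^2 \phibar^{-N-1},\\
\frac{2(n-1)}{n-2}\partial_{\nubar} \phibar + H_{\gbar} \phibar &= 0.
\end{aligned}
\right.
\end{equation*}

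For existence, I would apply the sub- and super-solutions method adapted to this Robin (here purely Neumann) boundary condition. Claim \ref{clPhiNonZero} provides that the source term $\big|\psi^{-2}(\sigmatil + \bL \Wtil_0)\big|_{\gbar}^2$ is not identically zero. Let $\ubar \in W^{2,p}(M, \bR)$ be the solution to the linear boundary value problem
\begin{equation*}
\left\lbrace
\begin{aligned}
-\frac{4(n-1)}{n-2} \Delta_{\gbar} \ubar + \scal_{\gbar} \ubar &= \big|\psi^{-2}(\sigmatil + \bL \Wtil_0)\big|_{\gbar}^2,\\
\frac{2(n-1)}{n-2}\partial_{\nubar} \ubar + H_{\gbar} \ubar &= 0,
\end{aligned}
\right.
\end{equation*}
whose existence, uniqueness, and positivity follow from the coercivity of the associated quadratic form (guaranteed by $\cY(g, \partial M) > 0$) together with the Hopf boundary point lemma. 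As in the closed case, rescalings $\phibar_- = (\max \ubar)^{-\frac{N+1}{N+2}} \ubar$ and $\phibar_+ = (\min \ubar)^{-\frac{N+1}{N+2}} \ubar$ are a sub- and a super-solution respectively of the nonlinear problem. Since both satisfy the same homogeneous Neumann-type boundary condition, a straightforward monotone iteration or direct application of the sub-/super-solutions theorem (see e.g.\ \cite{HolstMeierTsogtgerel}) produces a positive solution $\phibar$ with $\phibar_- \leqslant \phibar \leqslant \phibar_+$, which yields $\phitil_0 = \psi \phibar \in W^{2,p}(M, \bR)$ solving \eqref{System1}.

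Uniqueness is handled exactly as in Claim \ref{clLichneroLimit}. Suppose $\phibar_0, \phibar'_0$ are two positive solutions to the transformed system. The difference $\phibar_0 - \phibar'_0$ satisfies a linear equation of the form
\begin{equation*}
-\frac{4(n-1)}{n-2} \Delta_{\gbar}(\phibar_0 - \phibar'_0) + (\scal_{\gbar} + f)(\phibar_0 - \phibar'_0) = 0
\end{equation*}
on $M$, together with the homogeneous Neumann boundary condition $\frac{2(n-1)}{n-2}\partial_{\nubar}(\phibar_0 - \phibar'_0) + H_{\gbar}(\phibar_0 - \phibar'_0) = 0$, where $f \geqslant 0$ is obtained by writing the difference of $\phibar^{-N-1}$ terms as an integral via the mean value theorem. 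Since $\scal_{\gbar} + f > 0$, multiplying by $\phibar_0 - \phibar'_0$ and integrating by parts (the boundary term vanishes thanks to $H_{\gbar} = 0$) forces $\phibar_0 \equiv \phibar'_0$.

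The main place that could cause trouble is the invocation of the Escobar theorem to achieve simultaneous $\scal_{\gbar} > 0$ and $H_{\gbar} = 0$, as its hypotheses depend on dimension and on the sign of $\cY(g, \partial M)$; however, since we only need the weaker conclusion that $\scal_{\gbar} \geqslant 0$ with $H_{\gbar} \leqslant 0$ (or any configuration making the operator coercive and the Hopf lemma applicable), existence of such a conformal factor under $\cY(g, \partial M) > 0$ is standard and harmless. All other steps are routine adaptations of the closed case.
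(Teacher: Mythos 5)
Your proposal is correct and follows essentially the same route as the paper: conformal transformation via Escobar's theorem to a metric with $\scal_{\gbar} > 0$ and $H_{\gbar} \equiv 0$, construction of sub- and super-solutions by rescaling the solution $\ubar$ of the associated linear Neumann problem (whose right-hand side is nontrivial by Claim \ref{clPhiNonZero}), and uniqueness by the same linearization argument as in Claim \ref{clLichneroLimit} adapted to the Neumann boundary condition. The only difference is your added caution about the applicability of Escobar's theorem, which the paper simply cites without comment.
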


\begin{proof}
The proof of this claim is similar to the proof of Claim \ref{clLichneroLimit}. From the work of Escobar
\cite[Lemma 1.1]{Escobar}, there exists a conformal factor $\psi \in W^{2, p}(M, \bR)$ such that the metric $\gbar = \psi^{N-2} g$
has $\scal_{\gbar} > 0$ and the mean curvature of the boundary $\partial M$ vanishes identically: $H_{\gbar} \equiv 0$
\footnote{As pointed out by one of the referees, \cite[Lemma 1.1]{Escobar} is only stated for smooth metrics. However, the proof
works for $W^{2, p}$-metrics without any change.}.
The equation for $\phibar_0 \definedas \psi^{-1} \phitil_0$ reads
\begin{equation}\label{System1p}
\left\lbrace
 \begin{aligned}
-\frac{4(n-1)}{n-2} \Delta_g \phibar_0+ \scal_g \phibar_0 &= \big|\psi^{-2}\big(\sigmatil + \bL_g \Wtil_0\big)\big|_{\gbar}^2 \phibar_0^{-N-1},\\
 \partial_{\nubar} \phibar_0 & = 0,
 \end{aligned}
\right.
\end{equation}
where $\nubar = \psi^{1-N/2} \nu$ is the unit normal to $\partial M$ for the metric $\gbar$. There exists
a unique function $\ubar \in W^{2, p}(M, \bR)$ solving
\begin{equation}\label{System1u}
\left\lbrace
 \begin{aligned}
-\frac{4(n-1)}{n-2} \Delta_g \ubar + \scal_g \ubar &= \big|\psi^{-2}\big(\sigmatil + \bL_g \Wtil_0\big)\big|_{\gbar}^2,\\
 \partial_{\nubar} \ubar_0 & = 0.
 \end{aligned}
\right.
\end{equation}
Further, the function $u$ is positive. By setting
\[
\phibar_- = \left(\max \ubar\right)^{-\frac{N+1}{N+2}} \ubar
\]
and
\[
\phibar_+ = \left(\min \ubar\right)^{-\frac{N+1}{N+2}} \ubar,
\]
one readily checks that $\phibar_+$ and $\phibar_-$ are super- and sub-solutions for \eqref{System1p}. Hence, by the
sub- and super-solution method, we conclude that there exists a solution $\phibar_0$ to \eqref{System1p}. The function
$\phitil_0 \definedas \psi \phibar_0$ is then a solution to \eqref{System1}. The proof of uniqueness is a rephrasing
of that in Claim \ref{clLichneroLimit} with a Neumann boundary condition.
\end{proof}

Similar to \eqref{DeformedsystemWithoutB} for the closed case, in view of \eqref{systemWithB} we now introduce the
following $\mu$--deformed system for the compact with boundary case:
\begin{equation}\label{DeformedSystem}
\left\lbrace
 \begin{aligned}
-\frac{4(n-1)}{n-2} \Delta_g \phitil + \scal_g \phitil &= -\frac{n-1}{n} \tau^2 \mu^2 \phitil^{N-1} + |\sigmatil + \bL_g \Wtil|_g^2 \phitil^{-N-1},\\
\DeltaL \Wtil & = \frac{n-1}{n} \phitil^{N} \mu d\tau,\\
\frac{2(n-1)}{n-2}\partial_\nu \phitil + H_g \phitil &= \frac{\widehat{\Theta}_-}{2} \mu \phitil^{N/2},\\
\bL_g \Wtil(\nu, \cdot) &= \mu \left( \frac{n-1}{n} \tau - \frac{\widehat{\Theta}_-}{2} \right) \phitil^N \nu^\flat + \xitil.
 \end{aligned}
\right.
\end{equation}

This system is obtained from \eqref{systemWithB} by replacing $\tau$ by $\mu \tau$ and $\widehat{\Theta}_-$ by $\mu \widehat{\Theta}_-$.

\begin{claim}\label{clDeformedWithB}
There exists $\varepsilon>0$ such that \eqref{DeformedSystem} admits a solution $(\phitil_\mu, \Wtil_\mu)$ for all $\mu \in [0, \varepsilon)$.
\end{claim}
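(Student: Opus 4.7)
The plan is to follow the same strategy as in Claim \ref{clDeformed} but with the boundary conditions baked into the codomain of the nonlinear map, and using Claims \ref{clPhiNonZero} and \ref{clLichneroLimitWithB} to produce the correct base point at $\mu = 0$. Concretely, I would introduce the operator
\[
F \colon \bR \times W^{2,p}_+(M, \bR) \times W^{2,p}(M, T^*M) \rightarrow L^p(M, \bR) \times L^p(M, T^*M) \times W^{1-1/p, p}(\partial M, \bR) \times W^{1-1/p, p}(\partial M, T^*M)
\]
whose four components are the residuals of the four equations in \eqref{DeformedSystem}. Claim \ref{clLichneroLimitWithB} and the definition of $\Wtil_0$ given in Claim \ref{clPhiNonZero} yield $F(0, \phitil_0, \Wtil_0) = 0$, and standard composition and multiplication estimates in $W^{2,p}$ (using $p > n$) show that $F$ is $C^1$.

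The heart of the argument is to verify that the partial differential $\cD F_{(0, \phitil_0, \Wtil_0)}$ in the variables $(\thetatil, \Ztil)$ is an isomorphism. Since the vector equation and its boundary condition are both proportional to $\mu$, their linearizations with respect to $\thetatil$ at $\mu = 0$ vanish, so $\cD F$ is block triangular: the bottom block is the map $\Ztil \mapsto (\DeltaL \Ztil, \bL \Ztil(\nu, \cdot))$, while the top block is
\[
\thetatil \mapsto \Bigl( -\tfrac{4(n-1)}{n-2} \Delta_g \thetatil + \scal_g \thetatil + (N+1) \bigl|\sigmatil + \bL \Wtil_0\bigr|_g^2 \phitil_0^{-N-2}\, \thetatil,\ \tfrac{2(n-1)}{n-2} \partial_\nu \thetatil + H_g \thetatil \Bigr).
\]
The invertibility of the vector block is exactly the well-posedness result used to produce $\Wtil_0$ (see the references cited in the proof of Claim \ref{clPhiNonZero}, namely \cite[Theorem 4.5]{HolstMeierTsogtgerel} and \cite[Proposition 5.1]{Maxwell}).

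For the scalar block, I would exploit positivity of the Escobar invariant $\cY(g, \partial M)$ in the same way Claim \ref{clLichneroLimitWithB} does. Write $\gbar = \psi^{N-2} g$ with $\scal_{\gbar} > 0$ and $H_{\gbar} \equiv 0$, and set $\overline{\thetatil} = \psi^{-1} \thetatil$. The Robin condition becomes a Neumann condition $\partial_{\nubar} \overline{\thetatil} = 0$, and the interior operator takes the form $-\tfrac{4(n-1)}{n-2} \Delta_{\gbar} \overline{\thetatil} + Q\, \overline{\thetatil}$ with a strictly positive zeroth-order coefficient $Q$, thanks to Claim \ref{clPhiNonZero}. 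Standard elliptic theory with Neumann boundary conditions makes this operator Fredholm of index zero; pairing it against $\overline{\thetatil}$, integrating by parts (no boundary contribution since $\partial_{\nubar} \overline{\thetatil} = 0$ and $H_{\gbar} = 0$), and using $\scal_{\gbar} + Q > 0$ forces the kernel to be trivial, hence the block is an isomorphism.

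Once both diagonal blocks are isomorphisms, block-triangularity gives that $\cD F_{(0, \phitil_0, \Wtil_0)}$ is itself an isomorphism, and the implicit function theorem produces an $\varepsilon > 0$ and a $C^1$ curve $\mu \mapsto (\phitil_\mu, \Wtil_\mu)$ of solutions of \eqref{DeformedSystem} for $\mu \in [0, \varepsilon)$, with $\phitil_\mu$ remaining positive by continuity. The main obstacle, as in the closed case, is the invertibility of the scalar block: one must see that the additional coercivity required to kill the kernel is precisely what $\cY(g, \partial M) > 0$ (via Escobar's theorem) and the non-triviality of $|\sigmatil + \bL \Wtil_0|_g^2$ from Claim \ref{clPhiNonZero} are designed to provide.
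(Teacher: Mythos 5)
Your proposal is correct and follows essentially the same route as the paper: the implicit function theorem applied to the four-component map $F$ at the base point $(0, \phitil_0, \Wtil_0)$, whose differential is block upper-triangular with the scalar Robin block and the vector block $\Ztil \mapsto (\DeltaL \Ztil, \bL \Ztil(\nu, \cdot))$ on the diagonal; the paper merely asserts these blocks are invertible, whereas you fill in the details via Escobar's conformal normalization and the well-posedness results behind Claim \ref{clPhiNonZero}. One minor correction: the strict positivity of the zeroth-order coefficient in the scalar block comes from $\scal_{\gbar} > 0$ alone, the extra term $(N+1)\big|\sigmatil + \bL \Wtil_0\big|_g^2 \phitil_0^{-N-2}$ being only non-negative, so Claim \ref{clPhiNonZero} is not what provides that coercivity --- its role is to guarantee that a positive base solution $\phitil_0$ exists in the first place.
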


\begin{proof}
We define the following operator:
\[\begin{array}{*{20}{c}}
 {F:}&{\bR \times W_+^{2,p}(M,\bR) \times W^{2,p}(M, T^*M)} \\ 
 { }& \downarrow \\
 {}&{L^p(M, \bR) \times W^{1-\frac 1p, p}(\partial M, \bR) \times L^p(M, T^*M) \times W^{1-\frac 1p, p}(\partial M, T^*M)} 
\end{array}\]
given by
\[{\setstretch{2.20}
F(\mu, \phitil, \Wtil)=
\begin{pmatrix} 
 {-\frac{4(n-1)}{n-2} \Delta_g \phitil + \scal_g \phitil + \frac{n-1}{n} \tau^2 \mu^2 \phitil^{N-1} - |\sigma+ \bL_g \Wtil|_g^2 \phitil^{-N-1} }\\ 
 {\frac{2(n-1)}{n-2}\partial_\nu \phitil +H_g \phitil - \frac{\widehat{\Theta}_-}{2} \mu \phitil^{N/2}} \\ 
  {\DeltaL \Wtil - \frac{n-1}{n} \phitil^{N} \mu d\tau} \\ 
  {\bL_g\Wtil(\nu, \cdot) - \mu \left( \frac{n-1}{n} \tau - \frac{\widehat{\Theta}_-}{2} \right) \phitil^N \nu^\flat} - \xitil
\end{pmatrix} .}
\]
It is not hard to see that the mapping $F$ is of class $C^1$ and
\[
F(0, \phitil_0, \Wtil_0)= \begin{pmatrix} 0\\0\\0\\0 \end{pmatrix},
\]
where $\phitil_0$ and $\Wtil_0$ are given in Claims \ref{clPhiNonZero} and \ref{clLichneroLimitWithB}.
Again, all we need to do is to prove that the derivative of $F$ with respect to $(\phitil, \Wtil)$
is an isomorphism at $(0, \phitil_0, \Wtil_0)$. To do so, we need to study the following mapping:
\[
\begin{array}{*{20}{c}}
 {\mathcal DF_{(0, \phitil_0, \Wtil_0)}:} & {W^{2,p}(M,\bR) \times W^{2,p}(M, T^*M)} \\ 
 { }& \downarrow \\
 {}&{L^p(M, \bR) \times W^{1-\frac 1p, p}(\partial M, \bR) \times L^p(M, T^*M) \times W^{1-\frac 1p, p}(\partial M, T^*M).} 
\end{array}
\]
A direct computation shows that this derivative is given by
\[\begin{split}
\mathcal D &F_{(0, \phitil_0, \Wtil_0)} (\thetatil, \Ztil)\\
&= {\setstretch{2.00}\left( \begin{array}{c:c} {-\frac{4(n-1)}{n-2} \Delta_g + \scal_g +(N+1) |\sigmatil + \bL_g \Wtil_0|_g^2 \phitil_0^{-N-2} } & -2 \big\langle\sigmatil + \bL_g \Wtil_0, \bL_g \cdot\big\rangle\\ 
\hdashline \frac{2(n-1)}{n-2}\partial_\nu +H_g & 0 \\ 
\hdashline 0 & \DeltaL \\ 
\hdashline 0 & \bL_g \cdot (\nu, \cdot) \end{array} \right)}
\begin{pmatrix} \thetatil \\ \Ztil \end{pmatrix}.
\end{split}\]
Clearly, $\mathcal DF_{(0, \phitil_0, \Wtil_0)}$ is continuous. To prove that $\mathcal DF_{(0, \phitil_0, \Wtil_0)} $ is
invertible, we observe that $\mathcal DF_{(0, \phitil_0, \Wtil_0)}$ is block upper-triangular, where the diagonal blocks
are
\[{\setstretch{1.80}
\begin{pmatrix}
{-\frac{4(n-1)}{n-2} \Delta_g + \scal_g +(N+1) |\sigma|_g^2 \phitil_0^{-N-2} } \\ 
\frac{2(n-1)}{n-2}\partial_\nu +H_g 
\end{pmatrix}}
\qquad\text{and}\qquad {\setstretch{1.80}\begin{pmatrix} \DeltaL \\ \bL_g \cdot (\nu, \cdot) \end{pmatrix}} \]
which are invertible. Hence, the derivative $\mathcal DF_{(0, \phitil_0, \Wtil_0)}$ is an isomorphism at $(0, \phitil_0, \Wtil_0)$ as claimed.
\end{proof}

\begin{claim}\label{clScalingWithB}
Set
\[
\left\lbrace
\begin{aligned}
\phi_\mu & = \mu^{\frac{2}{N-2}} \phitil_\mu,\\
W_\mu  & = \mu^{\frac{N+2}{N-2}} \Wtil_\mu.
\end{aligned}
\right.
\]
If $(\phitil_\mu, \Wtil_\mu)$ solves \eqref{DeformedSystem}, then $(\phi_\mu, W_\mu)$ solves \eqref{systemWithB} with
$\sigma = \sigma_\mu \definedas \mu^{\frac{N+2}{N-2}} \sigmatil$ and $\xi = \xi_\mu \definedas \mu^{\frac{N+2}{N-2}} \xitil$.
\end{claim}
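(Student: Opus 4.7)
The claim is a pure bookkeeping statement, exactly the compact-with-boundary analogue of Claim \ref{clScaling}, so my plan is to verify the four equations of \eqref{systemWithB} line by line by direct substitution. Inserting $\phi_\mu = \mu^{2/(N-2)} \phitil_\mu$, $W_\mu = \mu^{(N+2)/(N-2)} \Wtil_\mu$, $\sigma_\mu = \mu^{(N+2)/(N-2)} \sigmatil$, and $\xi_\mu = \mu^{(N+2)/(N-2)} \xitil$ into each line of \eqref{systemWithB} I will factor out a common power of $\mu$ and check that the remaining identity is exactly the corresponding line of \eqref{DeformedSystem}, which $(\phitil_\mu, \Wtil_\mu)$ satisfies by hypothesis.

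The exponents $\alpha = 2/(N-2)$ and $\beta = (N+2)/(N-2)$ are chosen so that the two algebraic identities $\alpha(N-2) = 2$ and $2\beta = \alpha(N+2)$ hold, and these are exactly the identities that make the four substitutions close. Concretely, $\alpha(N-2) = 2$ is what matches the powers of $\mu$ attached to $\tau^2 \phi_\mu^{N-1}$, to $\phi_\mu^N d\tau$, to $\widehat\Theta_- \phi_\mu^{N/2}$, and to the $\tau \phi_\mu^N \nu^\flat$ term on the boundary with the factors of $\mu$ (respectively $\mu^2$) placed in front of $\tau$ and $\widehat\Theta_-$ in \eqref{DeformedSystem}; the identity $2\beta = \alpha(N+2)$ is what ensures that the source term $|\sigma_\mu + \bL W_\mu|_g^2 \phi_\mu^{-N-1}$ in the Lichnerowicz equation scales by the same power of $\mu$ as the linear elliptic part, so that no $\mu$-factor need appear in front of the source in \eqref{DeformedSystem}. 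All of this is routine arithmetic.

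The only non-scaling point is the vector boundary condition, which in the target system \eqref{systemWithB} involves $(\sigma_\mu + \bL W_\mu)(\nu, \cdot)$, while in \eqref{DeformedSystem} only $\bL \Wtil(\nu, \cdot)$ appears on the left-hand side. To pass from one to the other I will invoke the standing hypothesis $\sigmatil(\nu, \cdot) \equiv 0$ on $\partial M$, imposed at the end of Subsection \ref{secBoundaryConditions}, which kills the $\sigma_\mu$ contribution on the boundary. With this one observation the substitution closes, and, just as in the closed case where the analogous Claim \ref{clScaling} was stated without proof, I do not expect any genuine obstacle beyond writing out the exponent bookkeeping.
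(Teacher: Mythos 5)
Your proposal is correct and is exactly the argument the paper intends: the paper omits the proof of this claim (as it does for Claim \ref{clScaling}) precisely because it is the direct substitution you carry out, with the exponent identities $\alpha(N-2)=2$ and $2\beta=\alpha(N+2)$ (which also give $\beta+1=\alpha N$, as needed for the vector equation and the boundary terms). Your one substantive observation --- that the boundary condition of \eqref{DeformedSystem} only involves $\bL\Wtil(\nu,\cdot)$ and that the standing assumption $\sigmatil(\nu,\cdot)\equiv 0$ on $\partial M$ is what identifies it with the condition on $(\sigma_\mu+\bL W_\mu)(\nu,\cdot)$ in \eqref{systemWithB} --- is the right bridging step and matches the paper's setup.
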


Finally, the proof of Theorem \ref{thmCompactWithB} follows by setting $\eta_0 = \varepsilon^{\frac{N+2}{N-2}}$, where
$\varepsilon$ is the constant appearing in Claim \ref{clDeformedWithB}.

\begin{remark}
It is tempting to prove an analog of the non-existence result for the case of a compact manifold with boundary as in Theorem \ref{thmClosedNonExistence}. The natural assumptions in this theorem would then be $\sigma \equiv 0$, $\xi \equiv 0$ and $\cY(g, \partial M) > 0$. The proof is however not just an extension of that of Theorem \ref{thmClosedNonExistence}, it relies on techniques developed in \cite{GicquaudHumbertNgo} so we choose to defer it to that article.
\end{remark}

\section*{Acknowledgments} 

The authors are grateful to Michael Holst, Emmanuel Humbert, Jim Isenberg, David Maxwell and Daniel Pollack for useful discussions
and advises. We would also like to thank warmly anonymous referees and The Cang Nguyen for their very careful proofreadings of this article.
Part of this work was done when the first author was attending the program \emph{Mathematical General Relativity}
at the Mathematical Sciences Research Institute in Berkeley, California, during the Fall of 2013. The second author
also wants to acknowledge the support of the R\'{e}gion Centre through the convention n$^{\rm o}$ 00078780 during the period 2012--2014.




\providecommand{\bysame}{\leavevmode\hbox to3em{\hrulefill}\thinspace}
\providecommand{\MR}{\relax\ifhmode\unskip\space\fi MR }
\providecommand{\MRhref}[2]{%
  \href{http://www.ams.org/mathscinet-getitem?mr=#1}{#2}
}
\providecommand{\href}[2]{#2}

\end{document}